\DeclareMathAlphabet{\mathpzc}{OT1}{pzc}{m}{it}
\theoremstyle{definition}
\newtheorem{defi}{Definition}[section]
\newtheorem{exam}[defi]{Example}
\theoremstyle{theorem}
\newtheorem{theo}[defi]{Theorem}
\newtheorem{prop}[defi]{Proposition}
\newtheorem{coro}[defi]{Corollary}
\theoremstyle{remark}
\newtheorem{rema}[defi]{Remark}
\title[Open quantum random walks on distance sets]{Hypergroup structures of open quantum random walks on distance sets}
\author[Y. Sawada]{Yusuke Sawada}
\address{Graduate school of mathematics, Nagoya University, Chikusaku, Nagoya, 464-8602, Japan}
\email{m14017c@math.nagoya-u.ac.jp}
\begin{document}
\maketitle
\begin{abstract}
Wildberger has introduced the method to construct a hermitian discrete hypergroup from a random walk on a graph. We will apply his method to an open quantum random walk (OQRW) on a distance set, and show that any discrete hypergroup which is not necessary hermitian is realized by an OQRW on a distance set. We will investigate distributions of OQRWs on distance sets in the view point of hypergroups.
\end{abstract}

\section{Introduction}

A discrete hypergroup is a probability theoretic extension of a discrete group and defined by a $*$-algebra whose involution corresponds to the inverse map. If the involution is trivial then the discrete hypergroup is said to be hermitian. A hermitian discrete hypergroup is the extension of a discrete group whose inverse map is the identity map. We also have the theory of general (continuous) hypergroups, however we will treat only discrete hypergroups in this paper.

In \cite{wild94} and \cite{wild95}, Wildberger has introduced the method to construct a hermitian (discrete) hypergroup from a random walk on a special pointed graph $(\Gamma,v_0)$. Here, a random walker leaves from the fixed vertex $v_0$ (called the base point), and jumps a distance which is not necessary $1$ in each step. If such a random walk on a pointed graph $(\Gamma,v_0)$ constructs a hermitian hypergroup in his method, then we simply say that the pointed graph $(\Gamma,v_0)$ produces the hermitian hypergroup. Any pointed graph does not always produce hermitian hypergroups, and it is known that any pointed distance regular graph produces a hermitian hypergroup whose structure is independent of the choice of the base point (see \cite{ikka-sawa19}). We can apply any pointed graph to Wildberger's method and obtain a hermitian hypergroup, whose product may fail the accosiativity, called a pre-hypergroup.

In \cite{endo-mimu-sawa20}, when a pointed graph $(\Gamma,v_0)$ satisfies a symmetry condition $(S)$ which is weaker than the distance regularity, it has been shown that a distance distribution of random walks on $\Gamma$ can be computed by non-associative algebraic structure of the pre-hypergroup derived from $\Gamma$. We can regard that the concept of random walk on a pointed graph with the condition $(S)$ is included in the one of pre-hypergroup in the view point of distance distribution, as Figure 1.1. Any example of a pointed graph satisfying the condition $(S)$ and not producing a hermitian hypergroup, has not been found yet.

The concept of open quantum random walk (OQRW) has been introduced by Attal-Petruccione-Sabot-Sinayskiy in \cite{aps12} and \cite{apss12} as quantum Markov chains on graphs based on the non-unitary dynamics. Any classical Markov chain is recovered by OQRW. In \cite{agps15}, Attal-Guillotin-Plantard-Sabot have established the central limit theorem for OQRW. In \cite{dhah-ko-yoo19}, an OQRW is associated with a quantum Markov chain in the sense of \cite{acca-koro91}, and investigated the irreducibility and reducibility.

We shall give an outline of this paper as follows:

In Section 2, we will recall the concept of discrete hypergroup, Wildberger's construction, and the relationship between a distance distribution of a random walk on a pointed graph $\Gamma$ equipped with the symmetry condition $(S)$ and the structures of the pre-hypergroup derived from $\Gamma$.

In Section 3, influenced by OQRW in \cite{aps12} and \cite{apss12}, we will introduce a concept of open quantum random walk (OQRW) on distance set as a quantum analogy of a time evolution of a distance between the base point and a random vertex given by a random walk on a pointed graph. A distance set is $\{0,1,\ldots,N\}$ for some $N\in\mathbb{N}$ or $\mathbb{N}\cup\{0\}$. An OQRW $\mathcal{M}$ on distance set $\mathcal{D}$ will be defined as a family of maps, which acts on positive trace class operators, parametrized by elements in $\mathcal{D}$, and each element in $\mathcal{M}$ can be regarded as an OQRW in the sense of \cite{apss12}.

In Section 4, we will apply the idea of Wildberger's construction to an OQRW $\mathcal{M}$ on a distance set and give a method to construct a hypergroup (which is not necessarily hermitian) from $\mathcal{M}$. In this construction too, any OQRW on a distance set does not always produce a hypergroup. An arbitrary hypergroup $H$ will be realized by an OQRW $\mathcal{M}_H$ on a distance set, like the recovery of Markov chains by OQRWs. We can also obtain OQRW on a distance set from a pre-hypergroup by the method.

In Section 5, it will be shown that if an OQRW $\mathcal{M}$ on a distance set has a hypergroup-like structure, then each distribution of $\mathcal{M}$ described by the structure. This is a quantum analogy of the result in \cite{endo-mimu-sawa20}, and implies that a distribution of the OQRW $\mathcal{M}_H$ on the distance set derived from a hypergroup $H$ can be computed by the hypergroup structure. In this sense, we can regard that the concept of hypergroup is included in the one of OQRW on distance set as Figure 1.1. The hypergroup-like structure of $\mathcal{M}_H$ corresponds to the associativity of $H$. Thus, we can not obtain the above result for the OQRW on the distance set derived from a pre-hypergroup which is not a hypergroup.

In the future, it is hoped to investigate the limit theorem for OQRWs on distance sets and some connections with quantum information theory in \cite{kemp03}.

At the end of this section, we prepare some notations used in this paper as follows: let $\mathbb{N}$ and $\mathbb{C}$ denote the set of all integers greater than $0$ and the set of all complex numbers, respectively. Also, we put $\mathbb{N}_0=\mathbb{N}\cup\{0\}$.

For a set $S$, we denote by $\mathbb{C}S$ the free vector space of $S$ over $\mathbb{C}$ and by $|S|$ the cardinality of $S$.

Let $\mathcal{H}$ be a Hilbert space. We denote by $\mathcal{B}(\mathcal{H})$ the algebra consisting of all bounded linear operators on $\mathcal{H}$, and by $\mathcal{B}_1(\mathcal{H})$ the Banach space consisting of all trace class operators on $\mathcal{H}$ with the trace norm $\|\cdot\|_1$. Also, we define
\[
\mathcal{B}_{1}(\mathcal{H})_{+,1}=\{X\in\mathcal{B}_1(\mathcal{H})\mid
X\geq0,\ {\rm
Tr}(X)=1\},
\]
where ${\rm
Tr}(X)$ is the canonical trace of $X$. The identity operator and the zero map on $\mathcal{H}$ are denoted by ${\bf
1}$ and $O$, respectively. For two vectors $\xi,\eta\in\mathcal{H}$, the operator $\ket{\xi}\!\bra{\eta}$ is the one rank operator defined by
\[
\ket{\xi}\!\bra{\eta}(\zeta)=\langle\eta,\zeta\rangle\xi
\]
for each $\zeta\in\mathcal{H}$.\\

\noindent\begin{tikzpicture}
\draw (6,9.5) -- (11,9.5)  --(11,11) -- (6,11)  --  (6,9.5);
\draw (5,9.5) -- (12,9.5)  --(12,12.5) -- (5,12.5)  --  (5,9.5);
\draw[very thick]  (4,9.5) -- (11,9.5)  --(11,14) -- (4,14)  --  (4,9.5);
\filldraw[fill=blue, opacity=.1, draw=Indigo]  (4,9.5) -- (14.5,9.5)  --(14.5,14) -- (4,14)  --  (4,9.5);
\draw (3,9.5) -- (11,9.5)  --(11,15) -- (3,15)  --  (3,9.5);
\draw (2,9.5) -- (11,9.5)  --(11,16) -- (2,16)  --  (2,9.5);
\draw (12.7,14.3)node{pre-hypergroups};
\draw (7.5,14.3)node{hermitian hypergroups};
\draw (8,13.3)node{random walks on};
\draw (8,12.8)node{pointed graphs with $(S)$};
\draw (8.5,11.8)node{random walks on pointed};
\draw (8.5,11.3)node{distance regular graphs};
\draw (7,15.3)node{hypergroups};
\draw (6.5,16.3)node{open quantum random walks on distance sets};
\draw (11.5,11)node{?};
\draw (8,8.5)node{Figure 1.1};
\end{tikzpicture}

\section{Hypergroups and Wildberger's construction}\label{HGwildberger}
In this section, we will recall Wildberger's method to construct a hermitian hypergroup from a random walk on a pointed graph, and a description of a distance distribution of the random walk by the associated (pre-)hypergroup structure. 

The concept of hypergroup is the probability theoretic extension of the one of topological group, and introduced by Dunkl \cite{dunk73}, Jewett \cite{jewe75} and Spector \cite{spec78}. In this paper, we only treat discrete hypergroups. We refer the reader to \cite{lass05} for the general theory of discrete hypergroups (also, see \cite{bloo-heye95} for the theory of general hypergroups). We give the definition of discrete hypergroup as follows:
\begin{defi}
Suppose $H=\{x_i\}_{i\in
I(H)}$ is a countable set whose elements are parametrized by elements in $I(H)$, where $I(H)$ is $\{0,\ldots,N\}$ for some $N\in\mathbb{N}$ or $\mathbb{N}_0=\mathbb{N}\cup\{0\}$. If a binary operation $\circ$ and a map $*$ on the free vector space $\mathbb{C}H$ satisfy the following properties then the triple $(H,\circ,*)$ is called a discrete hypergroup.
\begin{enumerate}
\item
The restriction $*|_H$ maps onto $H$ and the triple $(\mathbb{C}H,\circ,*)$ is a $*$-algebra with the unit $x_0\in
H$.
\item
For each $i,j\in I(H)$, there exist $m\in
I(H)$ and $q_{i,j}^0,\ldots,q_{i,j}^m\geq0$ with $\sum_{k=0}^mq_{i,j}^k=1$ such that
\[
x_i\circ x_j=\sum_{k=0}^mq_{i,j}^kx_k.
\]
\item
For all $i,j\in I(H)$, we have $q_{i,j}^0\neq0$ if and only if $x_i=x_j^*$.
\end{enumerate}
A hypergroup $(H,\circ,*)$ is said to be hermitian if the restriction $*|_H$ is the identity map on $H$. A discrete pre-hypergroup is a hermitian discrete hypergroup $(H,\circ,*)$ such that the algebra $(\mathbb{C}H,\circ)$ may fail the associativity.

We call the numbers $\{q_{i,j}^k\}_{i,j,k\in I(H)}$ the structure constants of a hypergroup (or a pre-hypergroup) $(H,\circ,*)$.

In this paper, a discrete hypergroup (or a discrete pre-hypergroup) $(H,\circ,*)$ is simply called a hypergroup (or a pre-hypergroup, respectively) and denoted by $H$.
\end{defi}

Suppose $H=\{x_i\}_{i\in
I(H)}$ is a hypergroup (or a pre-hypergroup) and $\{q_{i,j}^k\}_{i,j,k\in
I(H)}$ are the structure constants of $H$. For a tuple $(k_1,\ldots,k_n)\in
I(H)^n$ with $n\geq2$, we have
\[
(( \cdots ((x_{k_1} \circ x_{k_2}) \circ x_{k_3}) \circ \cdots ) \circ x_{k_{n-1}})\circ  x_{k_n}=\sum_{m\in
I(H)}q_{k_1,\cdots,k_n}^mx_m
\]
for some $q_{k_1,\cdots,k_n}^m\geq0$ with $\sum_{m\in
I(H)}q_{k_1,\cdots,k_n}^m=1.$ Here, the left hand side of the above equation means the $n-1$ times products from the left step by step when $H$ is a pre-hypergroup. The numbers
\[
\{q_{k_1,\ldots,k_n}^m\mid
m\in\mathbb{N}_0,n\geq2,k_1,\ldots,k_n\in
I(H)\}
\]
is called the multi-structure constants of $H$.
\begin{rema}
If a hypergroup $(H,\circ,*)$ is hermitian then the algebra $(\mathbb{C}H,\circ)$ is commutative.
\end{rema}
The concept of isomorphism between hypergroups are defined as follows:

\begin{defi}
For two hypergroups $H$ and $H'$, a map $\Phi:H\to
H'$ is called a homomorphism from $H$ to $H'$ if $\Phi$ can be extended to a $*$-homomorphism from $\mathbb{C}H$ to $\mathbb{C}H'$. When $\Phi$ is bijective it is called an isomorphism from $H$ onto $H'$.
\end{defi}

We provide some notations and basic assumption with respect to graphs as the following. A pointed graph is a pair $(\Gamma,v_0)$ of a graph $\Gamma$ with a vertex set $V$ and a vertex $v_0\in
V$ called a base point. For each $v,w\in
V$, we denote by $d(v,w)$ the length of the shortest path from $v$ to $w$, called the distance between $v$ and $w$. We define a set $I(\Gamma,v_0)$ of distances between vertices by
\[
I(\Gamma,v_0)=\{n\in\mathbb{N}_0\mid
d(v,v_0)=n\ \mbox{for some }v\in
V\},
\]
and a sphere $S_n(v)$ of radius $n\in
I(\Gamma,v_0)$ centered at $v\in
V$ by
\[
S_{n}(v)=\{w\in
V\mid
d(v,w)=n\}.
\]
In this paper, assume that any graph $\Gamma$ is simple, connected, locally finite and has at most countable vertices, and any pointed graph $(\Gamma,v_0)$ satisfies $S_n(v) \neq \varnothing$ for all vertex $v$ and $n\in
I(\Gamma,v_0)$.

Now, we shall recall the construction of a hermitian hypergroup from a random walk on a pointed graph, introduced by Wildberger in \cite{wild94} and \cite{wild95} (see also \cite{ikka-sawa19}). Here, a random walk on a pointed graph $(\Gamma,v_0)$ means a random walk on $\Gamma$ leaving from the vertex $v_0$, that a random walker jumps a distance which is not necessary $1$ in each step.

Let $(\Gamma, v_0)$ be a pointed graph and put $H(\Gamma,v_0)=\{x_i\}_{i\in I(\Gamma,v_0)}$ with dummy symbols $x_i$ for each $i\in
I(\Gamma,v_0)$. We define a product on the free vector space $\mathbb{C}H(\Gamma,v_0)$ by
\[
x_i\circ
x_j=\sum_{k\in
I(\Gamma,v_0)}p_{i,j}^kx_k
\]
for each $i,j\in
I(\Gamma,v_0)$, where
\begin{equation}\label{wildp}
p_{i,j}^k=\frac{1}{|S_{i} (v_0)|} \sum_{v \in S_{i} (v_0)}\frac{|S_{j}(v)\cap S_{k} (v_0)|}{|S_{j} (v)|}.
\end{equation}
The sequance $(p_{i,j}^k)_{k\in
I(\Gamma,v_0)}$ is the distribution of distances between the base point $v_0$ and a random vertex $w\in S_{j} (v)$ after a jump to a random vertex $v\in S_{i}(v_0)$. We also define an involution on $\mathbb{C}H(\Gamma,v_0)$ by $x_i^*=x_i$ for each $i\in
I(\Gamma,v_0)$. Then $H(\Gamma,v_0)$ forms a pre-hypergroup with the structure constants $\{p_{i,j}^k\}_{i,jk\in
I(\Gamma,v_0)}$. If $H(\Gamma,v_0)$ forms a hermitian hypergroup in the above construction, then we say that the pointed graph $(\Gamma,v_0)$ produces the hermitian hypergroup $H(\Gamma,v_0)$.

All pointed graphs do not always produce hermitian hypergroups. By \cite[Proposition 3.2]{ikka-sawa19}, to check that $H(\Gamma,v_0)$ forms a hermitian hypergroup, it is enough to show that $\mathbb{C}H(\Gamma,v_0)$ satisfies the commutativity and the associativity. A distance regular graph is a graph equipped with a good symmetry with respect to distances, see \cite{bcn89} for the definition. It is known that any distance regular graph produces a hermitian hypergroup whose structure is independent of the choice of a base point, see \cite[Theorem 3.3]{ikka-sawa19} for the proof.
\begin{exam}\label{cycle4}
We denote by $\mathcal{C}_4$ the Cayley graph ${\rm
Cay}(\mathbb{Z}/4\mathbb{Z},\{\overline{1},\overline{3}\})$ drawn as Figure 2.1. We have $I(\mathcal{C}_4,\overline{0})=\{0,1,2\}$ and the pointed graph $(\mathcal{C}_4,\overline{0})$ produces a hermitian hypergroup $H({\rm
Cay}(\mathbb{Z}/4\mathbb{Z},\{\overline{1},\overline{3}\}),\overline{0})$ with the following structure:
\[
x_1\circ
x_1=\frac{1}{2}x_0+\frac{1}{2}x_2;\quad
x_1\circ
x_2=x_2\circ
x_1=x_1;\quad
x_2\circ
x_2=x_0.
\]
\[
\begin{xy}
(0,0)*{\circ}="A",
(-15,0)*{\circ}="B",
(0,-15)*{\circ}="C",
(-15,-15)*{\circ}="D",
{ "A" \ar @{-} "B" },
{ "A" \ar @{-} "C" },
{ "C" \ar @{-} "D" },
{ "B" \ar @{-} "D" },
(-17,4)*{\overline{0}},
(-7.5,-22)*{\mbox{{\rm
Figure }}2.1},
\end{xy}
\]

\end{exam}

\begin{exam}\label{lattice}
We consider the Cayley graph ${\rm
Cay}(\mathbb{Z},\{\pm1\})$ drawn as Figure 2.2, with a base point $0$. Then we have $I({\rm
Cay}(\mathbb{Z},\{\pm1\}),0)=\mathbb{N}_0$ and the pointed graph $({\rm
Cay}(\mathbb{Z},\{\pm1\}),0)$ produces the hermitian hypergroup $H({\rm
Cay}(\mathbb{Z},\{\pm1\}),0)$ with the structure given by
\[
x_i\circ
x_j=\frac{1}{2}x_{|i-j|}+\frac{1}{2}x_{i+j}
\]
for each $i,j\in\mathbb{N}_0$.
\[
\begin{xy}
(0,0)*{\circ}="A",
(-10,0)*{\circ}="B",
(-20,0)*{\circ}="C",
(-30,0)*{\circ}="D",
(10,0)*{\circ}="E",
(20,0)*{\circ}="F",
(30,0)*{\circ}="G",
{ "A" \ar @{-} "B" },
{ "B" \ar @{-} "C" },
{ "C" \ar @{-} "D" },
{ "A" \ar @{-} "E" },
{ "E" \ar @{-} "F" },
{ "F" \ar @{-} "G" },
(0,5)*{0},
(0,-7)*{\mbox{{\rm
Figure }}2.2\label{lattice}},
\end{xy}
\]

In general, let $F_n$ be the $n$-free group with the generator $A =\{ a_1, a_2, \ldots, a_n \}$.
The Cayley graph ${\rm Cay}(F_n,A\cup A^{-1})$ is distance regular, and hence it produces a hermitian hypergroup (see \cite[Corollary 3.8]{ikka-sawa19} for the structure constants).
\end{exam}
We can find many examples of non-distance regular graphs producing hypergroups in \cite[Section 4]{ikka-sawa19}.

We need only the first $2$ steps of a random walk on a pointed graph $(\Gamma,v_0)$ in Wildberger's construction. However, a distance distribution given by steps of an arbitrary number of times is described by a product on the pre-hypergroup $H(\Gamma,v_0)$ if the pointed graph $(\Gamma,v_0)$ satisfies a symmetry condition as follows:
\begin{theo}[{\rm\cite[Theorem 4.5]{endo-mimu-sawa20}}]\label{ems}
Let $H(\Gamma,v_0)=\{x_i\}_{i\in
I(\Gamma,v_0)}$ be the pre-hypergroup derived from a pointed graph $(\Gamma,v_0)$ satisfying the following symmetry condition:
\begin{itemize}
\item[$(S)$] the function $|S_i(\cdot)|$ is a constant on the vertex set, and the function $| S_i(\cdot) \cap S_j(v_0) |$ is a constant on $S_k( v_0 )$ for each $i, j, k \in I(\Gamma,v_0)$.
\end{itemize}
For each $k_1, \ldots , k_n \in I(\Gamma,v_0)$, we denote by $p_{k_1,\ldots,k_n}^m$ the coefficient of $x_m$ in the following element:
\begin{equation}\label{probp}
\sum_{v_1 \in S_{k_1}(v_0)}\sum_{v_2 \in S_{k_2}(v_1)} \cdots \sum_{v_n \in S_{k_n}(v_{n -1})}
\frac{1}{\prod_{j =1}^{n}| S_{k_j} ( v_{j -1} ) |} x_{d(v_n,v_0)}.
\end{equation}
Then $p_{k_1,\ldots,k_n}^m$ coincides with the multi-structure constant $q_{k_1,\ldots,k_n}^m$ of $H(\Gamma,v_0)$.
\end{theo}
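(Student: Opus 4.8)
The plan is to argue by induction on $n\geq 2$, showing that the geometric quantities $p_{k_1,\ldots,k_n}^m$ defined through \eqref{probp} obey exactly the same recursion as the multi-structure constants $q_{k_1,\ldots,k_n}^m$. For the base case $n=2$, I would expand \eqref{probp} directly: since the vertices $v_2\in S_{k_2}(v_1)$ with $d(v_2,v_0)=m$ are precisely those in $S_{k_2}(v_1)\cap S_m(v_0)$, the coefficient of $x_m$ is
\[
\sum_{v_1\in S_{k_1}(v_0)}\frac{|S_{k_2}(v_1)\cap S_m(v_0)|}{|S_{k_1}(v_0)|\,|S_{k_2}(v_1)|},
\]
which is exactly the structure constant $p_{k_1,k_2}^m$ of \eqref{wildp}, and by the definition of the product $\circ$ this equals $q_{k_1,k_2}^m$.

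For the inductive step, the key algebraic observation is that the left-to-right bracketing built into the multi-structure constants yields, from $((\cdots(x_{k_1}\circ x_{k_2})\circ\cdots)\circ x_{k_{n-1}})=\sum_\ell q_{k_1,\ldots,k_{n-1}}^\ell x_\ell$ together with $x_\ell\circ x_{k_n}=\sum_m p_{\ell,k_n}^m x_m$, the recursion
\[
q_{k_1,\ldots,k_n}^m=\sum_{\ell\in I(\Gamma,v_0)}q_{k_1,\ldots,k_{n-1}}^\ell\,p_{\ell,k_n}^m;
\]
note that this requires no associativity, only the iterated product from the left. It therefore suffices to prove the identical recursion for the geometric quantities and then to invoke the induction hypothesis $q_{k_1,\ldots,k_{n-1}}^\ell=p_{k_1,\ldots,k_{n-1}}^\ell$.

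To establish the geometric recursion I would isolate the final summation in \eqref{probp}. Collecting the terms $v_n\in S_{k_n}(v_{n-1})$ with $d(v_n,v_0)=m$ produces a factor $|S_{k_n}(v_{n-1})\cap S_m(v_0)|/|S_{k_n}(v_{n-1})|$, so that
\[
p_{k_1,\ldots,k_n}^m=\sum_{v_1,\ldots,v_{n-1}}\frac{1}{\prod_{j=1}^{n-1}|S_{k_j}(v_{j-1})|}\cdot\frac{|S_{k_n}(v_{n-1})\cap S_m(v_0)|}{|S_{k_n}(v_{n-1})|}.
\]
This is where condition $(S)$ enters: because $|S_{k_n}(\cdot)|$ is constant on the entire vertex set and $|S_{k_n}(\cdot)\cap S_m(v_0)|$ is constant on each sphere $S_\ell(v_0)$, the last ratio depends on $v_{n-1}$ only through $\ell=d(v_{n-1},v_0)$, and comparison with \eqref{wildp} identifies it with the structure constant $p_{\ell,k_n}^m$. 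Grouping the remaining sum according to $\ell=d(v_{n-1},v_0)$ then turns the inner sum into the coefficient of $x_\ell$ in the $(n-1)$-fold version of \eqref{probp}, namely $p_{k_1,\ldots,k_{n-1}}^\ell$, giving
\[
p_{k_1,\ldots,k_n}^m=\sum_{\ell\in I(\Gamma,v_0)}p_{k_1,\ldots,k_{n-1}}^\ell\,p_{\ell,k_n}^m,
\]
which closes the induction.

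The main obstacle is precisely this identification. Without the symmetry hypothesis $(S)$, the ratio $|S_{k_n}(v_{n-1})\cap S_m(v_0)|/|S_{k_n}(v_{n-1})|$ genuinely depends on the individual vertex $v_{n-1}$ and not merely on its distance $\ell$ from $v_0$, so the grouping-by-distance step cannot factor the sum into a product of structure constants and the recursion collapses. Verifying carefully that $(S)$ renders this ratio a function of $\ell$ alone---and that the leftover sum is exactly the $(n-1)$-step geometric quantity---is the crux of the argument; the remainder is routine bookkeeping of the indices in \eqref{probp}.
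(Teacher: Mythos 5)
Your argument is correct. Note that the paper itself gives no proof of this statement --- it is quoted verbatim from \cite[Theorem 4.5]{endo-mimu-sawa20} --- so there is no internal proof to compare against; but your induction (peel off the last summation, use condition $(S)$ to make the ratio $|S_{k_n}(v_{n-1})\cap S_m(v_0)|/|S_{k_n}(v_{n-1})|$ depend only on $\ell=d(v_{n-1},v_0)$ and hence equal $p_{\ell,k_n}^m$, then match the resulting recursion with the left-bracketed recursion $q_{k_1,\ldots,k_n}^m=\sum_\ell q_{k_1,\ldots,k_{n-1}}^\ell p_{\ell,k_n}^m$) is exactly the natural route and is the one the cited source takes.
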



The sequence $(p_{k_1,\ldots,k_n}^m)_{m\in
I(\Gamma,v_0)}$ is the conditional probability that the distance between $v_0$ and a vertex which the random walker reaches is $m$, under the $n$-times jumps $v_0\xrightarrow{k_1}\cdot\xrightarrow{k_2}\cdots\xrightarrow{k_n}\cdot\ $. The previous theorem enable us to compute a distribution by the (non-associative) algebraic structure.

\begin{rema}

\ 

\begin{enumerate}
\item
In the case when $n=2,i=k_1,j=k_2,k=m$, the coefficient of $x_m$ in \eqref{probp} coincides with \eqref{wildp}, and hence the number $p_{k_1,k_2}^m$ is well-defined.
\item
The condition $(S)$ is weaker than the distance regularity, and it is unclear whether any pointed graph with $(S)$ produces a hermitian hypergroup. We refer the reader to \cite[Example 4.6]{endo-mimu-sawa20} for an example of a pointed graph $(\Gamma,v_0)$ with $(S)$ such that $\Gamma$ is not distance regular.
\end{enumerate}
\end{rema}

Suppose a pointed graph $(\Gamma,v_0)$ produces a hermitian hypergroup $H(\Gamma,v_0)$ and $\{p_{i,j}^k\}_{i,j,k\in I(\Gamma,v_0)}$ is the structure constants of $H(\Gamma,v_0)$. A family $\mathcal{P}_{H(\Gamma,v_0)}=\{P_k\}_{k\in I(\Gamma,v_0)}$ of transition matrices $P_k = (p_{k,i}^j)_{i,j \in I (\Gamma,v_0)}$ satisfies that
\begin{equation}\label{matrixproduct}
P_iP_j=\sum_{k\in
I(\Gamma,v_0)}p_{i,j}^kP_k,
\end{equation}
and hence $P_i$ and $P_j$ commute, for all $i,j \in I(\Gamma,v_0)$. Also, the family $\mathcal{P}_{H(\Gamma,v_0)}$ is linear independent. In other words, we have an algebraic isomorphism between $\mathbb{C}H(\Gamma,v_0)$ and the commutative matrix algebra with the basis $\mathcal{P}_{H(\Gamma,v_0)}$. Each matrix $P_k$ can be regarded as bounded operators on the Hilbert space $\ell^2(I(\Gamma,v_0))$ consisting of all square summable sequences on $I(\Gamma,v_0)$ by
\[
(P_k(\xi))_n
=\sum_{l\in
I(\Gamma,v_0)}p_{k,l}^n\xi_l
\]
for each $\xi\in\ell^2(I(\Gamma,v_0))$, see \cite[Theorem 5.3]{endo-mimu-sawa20} for the norm estimation. The action can be regarded as to multiply the matrix $P_k$ by a row vector in $\ell^2(I(\Gamma,v_0))$ from the left. By Theorem \ref{ems} and \eqref{matrixproduct}, we have the following corollary.
\begin{coro}{\rm{(\cite[Corollary 5.7]{endo-mimu-sawa20})}}\label{maincoro}
Suppose a pointed graph $(\Gamma,v_0)$ satisfies the condition $(S)$ and produces a hermitian hypergroup $H(\Gamma,v_0)$. Let $\mathcal{P}_{H(\Gamma,v_0)}$ be the family of the transition matrices associated with $H(\Gamma,v_0)$. We have
\[
P_{k_n}\cdots
P_{k_1}=\sum_{m\in
I(\Gamma,v_0)}p_{k_n,\ldots,k_1}^mP_m
\]
for all $m\in\mathbb{N}$ and $k_1,\ldots,k_n\in
I(\Gamma,v_0)$.
\end{coro}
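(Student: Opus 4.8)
The plan is to run the whole argument through the algebra isomorphism between $\mathbb{C}H(\Gamma,v_0)$ and the matrix algebra generated by $\mathcal{P}_{H(\Gamma,v_0)}$. Relation \eqref{matrixproduct} says precisely that $x_k\mapsto P_k$ is an algebra homomorphism, and the stated linear independence of $\{P_k\}$ promotes it to an isomorphism; since $(\Gamma,v_0)$ is assumed to \emph{produce} a hermitian hypergroup, the algebra $(\mathbb{C}H(\Gamma,v_0),\circ)$ is genuinely associative. Hence an $n$-fold product of basis elements is independent of bracketing, its expansion coefficients being by definition the multi-structure constants $q^m_{k_1,\ldots,k_n}$, and Theorem~\ref{ems} identifies each such $q^m_{k_1,\ldots,k_n}$ with the distance-distribution number $p^m_{k_1,\ldots,k_n}$ appearing on the matrix side. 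Everything in sight is a finite sum, because each product $x_i\circ x_j$ is a finite combination of basis elements, so no convergence question arises even when $I(\Gamma,v_0)$ is infinite.

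To make this explicit I would argue by induction on $n$. For $n=2$ the assertion is \eqref{matrixproduct} read with $(i,j)=(k_2,k_1)$, together with the observation recorded just after Theorem~\ref{ems} that the two-step number $p^m_{k_2,k_1}$ coincides with the structure constant. For the inductive step I peel off the leftmost factor and use the inductive hypothesis:
\[
P_{k_{n+1}}\bigl(P_{k_n}\cdots P_{k_1}\bigr)
=P_{k_{n+1}}\sum_{m}p^m_{k_n,\ldots,k_1}P_m
=\sum_{l}\Bigl(\sum_{m}p^l_{k_{n+1},m}\,p^m_{k_n,\ldots,k_1}\Bigr)P_l,
\]
where the last equality expands each $P_{k_{n+1}}P_m$ by \eqref{matrixproduct}. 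Comparing with the desired right-hand side, the induction closes as soon as one proves the recursion $p^l_{k_{n+1},k_n,\ldots,k_1}=\sum_{m}p^l_{k_{n+1},m}\,p^m_{k_n,\ldots,k_1}$.

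This recursion is the heart of the matter, and it lives entirely inside the hypergroup. Let $w=\sum_m q^m_{k_n,\ldots,k_1}x_m$ be the left-bracketed product of $x_{k_n},\ldots,x_{k_1}$. On one hand $x_{k_{n+1}}\circ w=\sum_m q^m_{k_n,\ldots,k_1}(x_{k_{n+1}}\circ x_m)=\sum_l\bigl(\sum_m p^l_{k_{n+1},m}q^m_{k_n,\ldots,k_1}\bigr)x_l$; on the other hand, associativity makes the $(n+1)$-fold product bracketing-independent, so $x_{k_{n+1}}\circ w$ equals the left-bracketed product of $x_{k_{n+1}},x_{k_n},\ldots,x_{k_1}$, whose coefficient of $x_l$ is $q^l_{k_{n+1},k_n,\ldots,k_1}$. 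Equating coefficients gives the recursion with $q$'s, and Theorem~\ref{ems} converts every $q$ into the corresponding $p$. I expect the one genuine obstacle to be exactly this appeal to associativity: it is what separates a hypergroup from a pre-hypergroup, and it is the reason (as flagged in the introduction) that the analogous iterated-matrix identity cannot be expected when $(\Gamma,v_0)$ only produces a pre-hypergroup.
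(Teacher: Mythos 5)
Your argument is correct and is exactly the route the paper intends: the paper derives this corollary in one line from Theorem~\ref{ems} together with \eqref{matrixproduct}, and your induction (peeling off $P_{k_{n+1}}$, using associativity of $\mathbb{C}H(\Gamma,v_0)$ to get the recursion $q^l_{k_{n+1},\ldots,k_1}=\sum_m q^l_{k_{n+1},m}q^m_{k_n,\ldots,k_1}$, then converting $q$'s to $p$'s via Theorem~\ref{ems}) is just the natural way of spelling that out. Your observations that all sums are finite and that associativity is the essential ingredient distinguishing this from the pre-hypergroup case are both accurate.
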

The above corollary implies that the vector $(P_{k_n}\cdots
P_{k_1})p^{(0)}$ coincides with the distribution $(p_{k_1,\ldots,k_n}^i)_{i\in
I(\Gamma,v_0)}$ for the initial distribution $p^{(0)}=(1,0,0,\ldots)\in\ell^2(I(\Gamma,v_0))$, like computations of distributions by a transition matrix in a Markov chain.

By the commutativity of the hypergroup $H(\Gamma,v_0)$, we have
\[
p_{k_n,\ldots,k_1}^m=p_{\sigma(k_n),\ldots,\sigma(k_1)}^m
\]
for every permutation $\sigma$ on $\{k_1,k_2,\ldots,k_n\}$.

\section{open quantum random walks on distance sets}
An open quantum random walk (OQRW) has been introduced in \cite{aps12} and \cite{apss12}. In this section, we will introduce a concept of open quantum random walk on distance set as a family of OQRW. This is a quantum analogy of a time evolution of a distance between the base point and a random vertex given by a random walk on a pointed graph in the previous section.

Suppose $\mathcal{D}$ is the set $\{0,\ldots,N\}$ for some $N\in\mathbb{N}$ or $\mathbb{N}_0$. The term ``distance set'' usually means a family of distances between pairs of points in a given point set, however in this paper we call $\mathcal{D}$ a distance set too. For a hypergroup $H$ and a pointed graph $(\Gamma,v_0)$, the sets $I(H)$ and $I(\Gamma,v_0)$ are called the distance sets associated with $H$ and $(\Gamma,v_0)$, respectively.

Let $\mathcal{D}$ be a distance set, $\mathcal{K}$ a Hilbert space with an orthonormal basis $\{\ket{n}\}_{n\in\mathcal{D}}$ parametrized by elements in $\mathcal{D}$, and $\mathcal{H}$ a separable Hilbert space. Suppose $\{B_{i,j;k}\}_{i,j,k\in\mathcal{D}}$ is a family of bounded operators on $\mathcal{H}$ satisfying
\begin{equation}\label{OQRWB}
\sum_{i\in\mathcal{D}}B_{i,j;k}^*B_{i,j;k}={\bf1}
\end{equation}
for each $j,k\in\mathcal{D}$. Here, the above convergence is with respect to the strong operator topology in $\mathcal{B}(\mathcal{H})$ when $\mathcal{D}=\mathbb{N}_0$. The Hilbert space $\mathcal{K}$ is the space of distances (or positions in a broad sense), the Hilbert space $\mathcal{H}$ means the space of degrees of freedom, and the operator $B_{i,j;k}$ describes the effect of passing from $j$ to $i$ by a jump with distance $k$. For $i,j,k\in\mathcal{D}$, we define an operator
\[
M_{i,j;k}=B_{i,j;k}\otimes\ket{i}\!\bra{j}
\]
on $\mathcal{H}\otimes\mathcal{K}$. By \cite[Lemma 2.1, 2.2]{apss12}, it turns out that a series
\[
\mathcal{M}_k(X)=\sum_{i,j\in\mathcal{D}}M_{i,j;k}XM_{i,j;k}^*
\]
converges to a positive trace class operator with respect to the trace norm for each positive operator $X\in\mathcal{B}_1(\mathcal{H}\otimes\mathcal{K})$ and $k\in\mathcal{D}$. The map $\mathcal{M}_k$ can be extended to a map on $\mathcal{B}_1(\mathcal{H}\otimes\mathcal{K})$. The extension is also denoted by $\mathcal{M}_k$.
\begin{defi}
The family $\mathcal{M}=\{\mathcal{M}_k\}_{k\in\mathcal{D}}$ of the maps $\mathcal{M}_k$ on $\mathcal{B}_1(\mathcal{H}\otimes\mathcal{K})$ is called an open quantum random walk (OQRW) on a distance set $\mathcal{D}$.
\end{defi}

A quantum walker jumps between vertices in a graph in OQRWs in \cite{aps12} and \cite{apss12}, while a quantum walker jumps between natural numbers whose distance is not necessary $1$ in our OQRWs on distance sets. We can regard an OQRW $\mathcal{M}=\{\mathcal{M}_k\}_{k\in\mathcal{D}}$ on a distance set $\mathcal{D}$ as a family of OQRWs in \cite{aps12} and \cite{apss12}. Thus, it is often possible to apply each component $\mathcal{M}_k$ to an existing result on OQRWs as the following proposition:
\begin{prop}\label{oqwlemm}{\rm
(}\cite[Lemma 2.4, Proposition 2.3]{apss12}{\rm
)}
We define the isometry $U_j:\mathcal{H}\ni\xi\mapsto\xi\otimes\ket{j}\in\mathcal{H}\otimes\mathcal{K}$ for each $j\in\mathcal{D}$. For $\rho\in\mathcal{B}_1(\mathcal{H}\otimes\mathcal{K})_{+,1}$, the positive operator $\rho_j=U_j^*\rho
U_j\in\mathcal{B}_1(\mathcal{H})$ satisfies
\[
({\bf
1}\otimes\ket{i}\!\bra{j})\rho({\bf
1}\otimes\ket{j}\!\bra{i})=\rho_j\otimes\ket{i}\!\bra{i},\quad{\rm
Tr}(\rho_j)={\rm
Tr}(\rho({\bf
1}\otimes\ket{j}\!\bra{j})).
\]
If $\mathcal{M}=\{\mathcal{M}_k\}_{k\in\mathcal{D}}$ is an OQRW on $\mathcal{D}$ then each series $\sum_{j\in
\mathcal{D}}B_{i,j;k}\rho_jB_{i,j;k}^*$ converges to a positive trace class operator with respect to the trace norm on $\mathcal{B}_1(\mathcal{H})$, and satisfies
\[
\sum_{i\in\mathcal{D}}{\rm
Tr}\left(\sum_{j\in\mathcal{D}}B_{i,j;k}\rho_jB_{i,j;k}^*\right)=1,\quad\mathcal{M}_{k}(\rho)=\sum_{i\in
\mathcal{D}}\sum_{j\in
\mathcal{D}}B_{i,j;k}\rho_jB_{i,j;k}^*\otimes\ket{i}\!\bra{i}
\]
for each $k\in\mathcal{D}$, where the second convergence is in the trace norm on $\mathcal{B}_1(\mathcal{H}\otimes\mathcal{K})$. Thus, we have $\mathcal{M}_k(X)\in\mathcal{B}_1(\mathcal{H}\otimes\mathcal{K})_{+,1}$ for each $X\in\mathcal{B}_1(\mathcal{H}\otimes\mathcal{K})_{+,1}$.
\end{prop}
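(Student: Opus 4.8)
The plan is to verify each of the asserted identities in the stated order, drawing on the cited results \cite[Lemma 2.4, Proposition 2.3]{apss12} and the structure of the operators $M_{i,j;k}=B_{i,j;k}\otimes\ket{i}\!\bra{j}$. First I would establish the two identities for $\rho_j=U_j^*\rho U_j$. For the first, I would compute the action of $({\bf1}\otimes\ket{i}\!\bra{j})\rho({\bf1}\otimes\ket{j}\!\bra{i})$ on a simple tensor $\xi\otimes\ket{n}$ and compare it with the action of $\rho_j\otimes\ket{i}\!\bra{i}$. Writing $\ket{j}\!\bra{i}(\ket{n})=\langle i,n\rangle\ket{j}=\delta_{i,n}\ket{j}$, both sides vanish unless $n=i$, and when $n=i$ the left side produces $({\bf1}\otimes\ket{i}\!\bra{j})\rho(\xi\otimes\ket{j})$, whose $\mathcal{K}$-component is forced into $\ket{i}$ by $\ket{i}\!\bra{j}$, leaving the $\mathcal{H}$-component $U_i^*\rho U_j$. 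Since $U_j^*\rho U_j=\rho_j$ is what appears when $\xi$ is paired through the $\ket{i}$ slot, a short index bookkeeping identifies the $\mathcal{H}$-part as $\rho_j$ acting on $\xi$; hence the two operators agree. For the trace identity, I would use the cyclicity of the trace together with $U_j^*U_j={\bf1}$: one has ${\rm Tr}(\rho_j)={\rm Tr}(U_j^*\rho U_j)={\rm Tr}(\rho\,U_jU_j^*)$, and $U_jU_j^*={\bf1}\otimes\ket{j}\!\bra{j}$, which is exactly the asserted right-hand side.

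Next I would treat the convergence of $\sum_{j\in\mathcal{D}}B_{i,j;k}\rho_jB_{i,j;k}^*$ and the total-trace normalization. Here I would invoke \cite[Lemma 2.4, Proposition 2.3]{apss12} directly, since each fixed $\mathcal{M}_k$ is an OQRW in the sense of \cite{apss12} with the completeness relation \eqref{OQRWB}: the operators $B_{i,j;k}^*B_{i,j;k}$ summed over $i$ give ${\bf1}$, which is precisely the hypothesis under which the cited results guarantee trace-norm convergence of $\sum_{j}B_{i,j;k}\rho_jB_{i,j;k}^*$ to a positive trace class operator. For the normalization $\sum_{i}{\rm Tr}(\sum_{j}B_{i,j;k}\rho_jB_{i,j;k}^*)=1$, I would interchange the (absolutely convergent, positive) sums and use cyclicity to rewrite ${\rm Tr}(B_{i,j;k}\rho_jB_{i,j;k}^*)={\rm Tr}(B_{i,j;k}^*B_{i,j;k}\rho_j)$; summing over $i$ and applying \eqref{OQRWB} collapses this to $\sum_{j}{\rm Tr}(\rho_j)$, and by the trace identity just proved this equals $\sum_{j}{\rm Tr}(\rho({\bf1}\otimes\ket{j}\!\bra{j}))={\rm Tr}(\rho)=1$, using $\sum_{j}\ket{j}\!\bra{j}={\bf1}$ on $\mathcal{K}$ and ${\rm Tr}(\rho)=1$ since $\rho\in\mathcal{B}_1(\mathcal{H}\otimes\mathcal{K})_{+,1}$.

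For the explicit formula for $\mathcal{M}_k(\rho)$, I would start from the definition $\mathcal{M}_k(\rho)=\sum_{i,j}M_{i,j;k}\rho M_{i,j;k}^*$ and expand each $M_{i,j;k}\rho M_{i,j;k}^*=(B_{i,j;k}\otimes\ket{i}\!\bra{j})\rho(B_{i,j;k}^*\otimes\ket{j}\!\bra{i})$. The key manipulation is to factor the $\mathcal{K}$-operators outward and recognize the middle piece $({\bf1}\otimes\ket{i}\!\bra{j})\rho({\bf1}\otimes\ket{j}\!\bra{i})=\rho_j\otimes\ket{i}\!\bra{i}$ from the first identity; conjugating by $B_{i,j;k}\otimes{\bf1}$ then yields $B_{i,j;k}\rho_jB_{i,j;k}^*\otimes\ket{i}\!\bra{i}$. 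Summing over $i,j$ gives the claimed expression. The final sentence, that $\mathcal{M}_k$ maps $\mathcal{B}_1(\mathcal{H}\otimes\mathcal{K})_{+,1}$ into itself, then follows: positivity is clear from the form $\sum_{i}(\,\cdot\,)\otimes\ket{i}\!\bra{i}$ with each summand positive, and the trace-one property is exactly the normalization identity established above.

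I expect the main obstacle to be the careful justification of the operator identity $({\bf1}\otimes\ket{i}\!\bra{j})\rho({\bf1}\otimes\ket{j}\!\bra{i})=\rho_j\otimes\ket{i}\!\bra{i}$ and the legitimacy of interchanging the infinite sums over $i,j$ when $\mathcal{D}=\mathbb{N}_0$. The operator identity is where one must be scrupulous about the tensor-leg bookkeeping, since $\rho$ need not be a simple tensor; I would reduce to finite-rank $\rho$ by trace-norm density and then pass to the limit, or argue via matrix elements $\langle\xi\otimes\ket{m},\,(\cdot)\,\eta\otimes\ket{n}\rangle$. The convergence interchanges are controlled by positivity of all terms together with the trace-norm estimates furnished by \cite[Lemma 2.4, Proposition 2.3]{apss12}, so the strong-operator convergence in \eqref{OQRWB} suffices to make the termwise trace computations rigorous.
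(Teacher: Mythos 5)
Your proof plan is correct. Note, however, that the paper offers no proof of this proposition at all: it is stated purely as a quotation of Lemma 2.4 and Proposition 2.3 of \cite{apss12}, so there is no internal argument to compare against; your direct verification supplies exactly what the citation leaves implicit. Two small remarks. First, the cleanest way to nail down the identity $({\bf 1}\otimes\ket{i}\!\bra{j})\rho({\bf 1}\otimes\ket{j}\!\bra{i})=\rho_j\otimes\ket{i}\!\bra{i}$ is to observe that ${\bf 1}\otimes\ket{i}\!\bra{j}=U_iU_j^*$ and $X\otimes\ket{i}\!\bra{i}=U_iXU_i^*$, whence the left-hand side equals $U_i(U_j^*\rho U_j)U_i^*=U_i\rho_jU_i^*$; this disposes of the tensor-leg bookkeeping you rightly flag as the delicate point, and it also corrects the stray ``$U_i^*\rho U_j$'' in your sketch, which should be the operator $U_j^*\rho U_j$ conjugated by $U_i$. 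The same factorization makes the formula for $\mathcal{M}_k(\rho)$ immediate, since $M_{i,j;k}=(B_{i,j;k}\otimes{\bf 1})U_iU_j^*$. Second, in the normalization computation the interchange of $\sum_i$ with the trace against $\rho_j$ is justified exactly as you indicate: all terms are nonnegative and $\sum_i B_{i,j;k}^*B_{i,j;k}={\bf 1}$ in the strong operator topology, so $\sum_i{\rm Tr}(B_{i,j;k}^*B_{i,j;k}\rho_j)={\rm Tr}(\rho_j)$ by monotone convergence applied to the spectral decomposition of $\rho_j$, and then $\sum_j{\rm Tr}(\rho_j)={\rm Tr}(\rho)=1$ via $\sum_j\ket{j}\!\bra{j}={\bf 1}$ as you state.
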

In \cite{apss12}, the operator $\rho_j$ in the above proposition is denoted by $\bra{j}\rho\ket{j}$.

If a state $\rho\in\mathcal{B}_1(\mathcal{H}\otimes\mathcal{K})_{+,1}$ has the form
\[
\rho=\sum_{i\in\mathcal{D}}\rho_i\otimes\ket{i}\!\bra{i}
\]
for some positive operators $\rho_i\in\mathcal{B}_1(\mathcal{H})$ with $\sum_{i\in\mathcal{D}}{\rm
Tr}(\rho_i)=1$ then a measurement along the basis $\{\ket{n}\}_{n\in\mathcal{D}}$ gives $\ket{i}$ with the probability ${\rm
Tr}(\rho_i)$. We define a distribution $d(\rho)$ of $\rho$ by
\[
d(\rho)=({\rm
Tr}(\rho_i))_{i\in\mathcal{D}}.
\]
For an OQRW $\mathcal{M}=\{\mathcal{M}_k\}_{k\in\mathcal{D}}$ on a distance set $\mathcal{D}$, we have
\[
\mathcal{M}_k(\rho)=\sum_{i,j\in\mathcal{D}}B_{i,j;k}\rho_jB_{i,j;k}^*\otimes\ket{i}\!\bra{i}
\]
for all $k\in\mathcal{D}$ by \cite[Corollary 2.5]{apss12}.

We will consider a dynamical system given by $\mathcal{M}$ with an initial state $\rho^{(0)}=\sum_{i\in\mathcal{D}}\rho_i^{(0)}\otimes\ket{i}\!\bra{i}\in\mathcal{B}_1(\mathcal{H}\otimes\mathcal{K})_{+,1}$. For a tuple ${\bf
k}=(k_1,\ldots,k_n)\in\mathcal{D}^n$, we define a state $\rho^{(n;{\bf
k})}$ and operators $\rho_i^{(n;{\bf
k})}$ by
\[
\rho^{(n;{\bf
k})}=\mathcal{M}_{k_n}\circ\cdots\circ\mathcal{M}_{k_1}(\rho^{(0)})=\sum_{i\in\mathcal{D}}\rho_i^{(n;{\bf
k})}\otimes\ket{i}\!\bra{i},\]
and write $p^{(0)}=d(\rho^{(0)})$ and $p^{(n;{\bf
k})}=d(\rho^{(n;{\bf
k})})$. The probability which a measurement gives $\ket{i}$ is $p_i^{(n;{\bf
k})}$ after $n$ jumps $(k_1,\ldots,k_n)$ (cf. the distribution $(p_{k_1,\ldots,k_n}^m)_{m\in
I(\Gamma,v_0)}$ defined in Theorem \ref{ems}).

Recall that the family $\mathcal{P}_{H(\Gamma,v_0)}$ is linear independent when a pointed graph $(\Gamma,v_0)$ produces a hypergroup $H(\Gamma,v_0)$. We shall investigate the linear independence of an OQRW on a distance set, on the Banach space $\mathcal{L}(\mathcal{B}_{1}(\mathcal{H}\otimes\mathcal{K}))$ consisting of all bounded maps on $\mathcal{B}_{1}(\mathcal{H\otimes\mathcal{K}})$ as follows:

\begin{prop}\label{inde}
If an open quantum random walk $\mathcal{M}=\{\mathcal{M}_k\}_{k\in\mathcal{D}}$ on a distance set $\mathcal{D}$ satisfies either of the following two conditions then the family $\{\mathcal{M}_k\}_{k\in\mathcal{D}}$ is linear independent on $\mathcal{L}(\mathcal{B}_{1}(\mathcal{H}\otimes\mathcal{K}))$.
\begin{enumerate}[{\rm
(1)}]
\item
There exist $j_0\in\mathcal{D}$ and $\xi_0\in\mathcal{H}$ such that the family $\{B_{i,j_0;k}\xi_0\}_{k\in\mathcal{D}}$ is linear independent on $\mathcal{H}$ for all $i\in\mathcal{D}$.
\item
There exist $j_0\in\mathcal{D}$ and a family $\{U_k\}_{k\in\mathcal{D}}$ of isometry operators $U_k$ on $\mathcal{H}$ such that
\[
B_{i,j_0,k}=\begin{cases}
U_k\quad{(i=k)}\\
O\quad{(i\neq
k)}.
\end{cases}
\] 
\end{enumerate}
\end{prop}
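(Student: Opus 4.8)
The plan is to prove linear independence directly: assuming a finite linear relation $\sum_{k} c_k \mathcal{M}_k = O$ in $\mathcal{L}(\mathcal{B}_{1}(\mathcal{H}\otimes\mathcal{K}))$, I would show that every $c_k$ vanishes by feeding the relation suitably chosen test states and reading off one coefficient at a time. The natural test states are those concentrated at the distinguished position $j_0$, namely $\rho=\sigma\otimes\ket{j_0}\!\bra{j_0}$ for positive $\sigma\in\mathcal{B}_1(\mathcal{H})$. For such $\rho$ one has $U_j^*\rho U_j=\sigma$ when $j=j_0$ and $O$ otherwise, so the formula in Proposition \ref{oqwlemm} collapses to the single index $j=j_0$ and gives
\[
\mathcal{M}_k(\rho)=\sum_{i\in\mathcal{D}}B_{i,j_0;k}\,\sigma\,B_{i,j_0;k}^*\otimes\ket{i}\!\bra{i}.
\]

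Next I would separate the positions. Compressing the relation $\sum_k c_k\mathcal{M}_k(\rho)=O$ by $({\bf1}\otimes\bra{i})(\,\cdot\,)({\bf1}\otimes\ket{i})$, which is exactly the operation $U_i^*(\,\cdot\,)U_i$, isolates the $i$-th position and yields, for every $i\in\mathcal{D}$ and every positive $\sigma$,
\[
\sum_k c_k\,B_{i,j_0;k}\,\sigma\,B_{i,j_0;k}^*=O.
\]
Since the positive operators span $\mathcal{B}_1(\mathcal{H})$, this identity holds for all $\sigma$, and in particular for the rank-one choice $\sigma=\ket{\xi_0}\!\bra{\xi_0}$, which turns it into $\sum_k c_k\ket{B_{i,j_0;k}\xi_0}\!\bra{B_{i,j_0;k}\xi_0}=O$.

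For case (1) I would invoke the elementary fact that if vectors $\{\eta_k\}$ in a Hilbert space are linearly independent, then the rank-one operators $\{\ket{\eta_k}\!\bra{\eta_k}\}$ are linearly independent as operators; applying this with $\eta_k=B_{i,j_0;k}\xi_0$ for any fixed $i$ (every $i$ works by hypothesis) forces all $c_k=0$. I regard this rank-one independence lemma as the main technical point, and I would prove it by choosing a biorthogonal system $\{\zeta_m\}$ with $\langle\zeta_m,\eta_k\rangle=\delta_{m,k}$ inside the finite-dimensional span of the $\eta_k$, so that sandwiching $\sum_k c_k\ket{\eta_k}\!\bra{\eta_k}$ between $\zeta_m$ and $\zeta_m$ reads off precisely $c_m$.

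For case (2) the argument is simpler because the positions already separate the $\mathcal{M}_k$. With $B_{i,j_0;k}=U_k$ for $i=k$ and $O$ otherwise, only the term $i=k$ survives, so $\mathcal{M}_k(\rho)=U_k\sigma U_k^*\otimes\ket{k}\!\bra{k}$. Compressing the relation at position $k$ gives $c_k\,U_k\sigma U_k^*=O$ for all $\sigma$; taking $\sigma=\ket{\xi}\!\bra{\xi}$ with $\xi\neq0$ and using that $U_k$ is an isometry (hence $U_k\xi\neq0$) yields $c_k\ket{U_k\xi}\!\bra{U_k\xi}=O$, whence $c_k=0$. In both cases every coefficient vanishes, which proves the asserted linear independence.
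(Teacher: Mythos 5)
Your proof is correct, and your treatment of case (2) is essentially the paper's argument verbatim (the paper reads off the coefficient via the trace of $\alpha_kU_k\rho U_k^*$ rather than via a compression, but this is the same computation). For case (1) you and the paper share the same skeleton --- feed the relation the state $\ket{\xi_0}\!\bra{\xi_0}\otimes\ket{j_0}\!\bra{j_0}$ and separate the positions $i$ --- but the concluding step is genuinely different. The paper applies the positional block $\sum_k\alpha_k B_{i,j_0;k}\ket{\xi_0}\!\bra{\xi_0}B_{i,j_0;k}^*$ to the single tailored vector $B_{i,j_0;i}\xi_0$ (packaged as the test vector $\sum_l B_{l,j_0;l}\xi_0\otimes\ket{l}$), obtains a vanishing combination of the vectors $B_{i,j_0;k}\xi_0$ with coefficients $\alpha_k\langle\xi_0,B_{i,j_0;k}^*B_{i,j_0;i}\xi_0\rangle$, and can only extract the diagonal term $\alpha_i\|B_{i,j_0;i}\xi_0\|^2=0$ (the off-diagonal inner products may vanish), so it must range over all $i\in\mathcal{D}_0$ to kill the coefficients one at a time. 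You instead retain the full operator identity $\sum_k c_k\ket{\eta_k}\!\bra{\eta_k}=O$ with $\eta_k=B_{i,j_0;k}\xi_0$ and invoke the lemma that rank-one projections onto linearly independent vectors are linearly independent, proved by a biorthogonal system in the finite-dimensional span of $\{\eta_k\}_{k\in\mathcal{D}_0}$; this is a standard and correctly justified lemma, it kills every $c_k$ from a single choice of $i$, and it avoids the somewhat ad hoc test vector. Both routes are valid; yours buys a slightly cleaner and more reusable argument at the cost of one extra lemma, while the paper's stays entirely at the level of vectors.
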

\begin{proof}
Let $\{\alpha_k\}_{k\in\mathcal{D}_0}$ be a family of complex numbers satisfies that $\sum_{k\in\mathcal{D}_0}\alpha_k\mathcal{M}_k=O$ on $\mathcal{B}_{1}(\mathcal{H}\otimes\mathcal{K})$, where $\mathcal{D}_0$ is an arbitrary finite subset of $\mathcal{D}$.\

Assume the condition $(1)$. We have
\begin{align*}
&0=\left(\sum_{k\in\mathcal{D}_0}\alpha_k\mathcal{M}_k(\ket{{\xi}_0}\!\bra{{\xi}_0}\otimes\ket{j_0}\!\bra{j_0})\right)\left(\sum_{l\in\mathcal{D}_0}B_{l,j_0;l}\xi_0\otimes\ket{l}\right)\\
&=\sum_{i\in\mathcal{D}_0}\sum_{k\in\mathcal{D}_0}\alpha_kB_{i,j_0;k}\ket{{\xi}_0}\!\bra{{\xi}_0}B_{i,j_0;k}^*B_{i,j_0;i}\xi_0\otimes\ket{i}\\
&=\sum_{i\in\mathcal{D}_0}\sum_{k\in\mathcal{D}_0}\alpha_k\langle\xi_0,B_{i,j_0;k}^*B_{i,j_0;i}\xi_0\rangle\
B_{i,j_0;k}\xi_0\otimes\ket{i}
\end{align*}
Fix an arbitrary $i\in\mathcal{D}_0$. By the above equation, we have
\[
\sum_{k\in\mathcal{D}_0}\alpha_k\langle\xi_0,B_{i,j_0;k}^*B_{i,j_0;i}\xi_0\rangle\
B_{i,j_0;k}\xi_0=0.
\]
The linear independence of $\{B_{i,j_0;k}\xi_0\}_{k\in\mathcal{D}}$ implies that $B_{i,j_0;i}\xi_0\neq0$ and $\alpha_k\langle\xi_0,B_{i,j_0;k}^*B_{i,j_0;i}\xi_0\rangle=0$ for all $k\in\mathcal{D}_0$, and hence we have $\alpha_i\langle\xi_0,B_{i,j_0;i}^*B_{i,j_0;i}\xi_0\rangle=0$ equivalently $\alpha_i=0$.

Assume the condition $(2)$. For an arbitrary operator $\rho\in\mathcal{B}_1(\mathcal{H})_{+,1}$, we have
\begin{align*}
&0=\sum_{k\in\mathcal{D}_0}\alpha_k\mathcal{M}_k(\rho\otimes\ket{j_0}\!\bra{j_0})=\sum_{k\in\mathcal{D}_0}\alpha_kU_k\rho
U_k^*\otimes\ket{k}\!\bra{k},
\end{align*}
and hence each trace of $\alpha_kU_k\rho
U_k^*$ is $0$. This implies that $\alpha_k=0$ for all $k\in\mathcal{D}_0$.
\end{proof}


\section{Realization of hypergroups}
In \cite[Section 6]{apss12}, any Markov chain is recovered by an open quantum random walk. In this section, we shall apply Wildberger's method in Section \ref{HGwildberger} to an OQRW on a distance set. Any not necessarily hermitian hypergroup will be realized by an OQRW on a distance set in a method influenced by \cite[Section 6]{apss12}. 

Let $\mathcal{M}=\{\mathcal{M}_k\}_{k\in\mathcal{D}}$ be an OQRW on a distance set $\mathcal{D}$ and $\rho^{(0)}=\sum_{i\in\mathcal{D}}\rho_i^{(0)}\otimes\ket{i}\!\bra{i}\in\mathcal{B}_1(\mathcal{H}\otimes\mathcal{K})_{+,1}$ an initial state. We define a non-negative number
\[
\tilde{p}_{k,l}^m={\rm
Tr}(\rho_m^{(2;(l,k))})
\]
for each $k,l,m\in\mathcal{D}$. It is clear that $(\tilde{p}_{k,l}^m)_{m\in\mathcal{D}}$ is a probability measure on $\mathcal{D}$. We assume that a set $\{m\in\mathcal{D}\mid
\tilde{p}_{k,l}^m\neq0\}$ is finite for all $k,l\in\mathcal{D}$ and define a product $\circ$ on the free vector space $\mathbb{C}H_\mathcal{M}$ with a basis $H_\mathcal{M}=\{z_i\}_{i\in\mathcal{D}}$ by
\[
z_i\circ
z_j=\sum_{k\in\mathcal{D}}\tilde{p}_{i,j}^kz_k
\]
for each $i,j\in\mathcal{D}$. If a triple $(H_\mathcal{M},\circ,*)$ forms a hypergroup for some involution $*$ on $\mathbb{C}H_\mathcal{M}$ then we say that the OQRW $\mathcal{M}$ on $\mathcal{D}$ with the initial state $\rho^{(0)}$ produces a hypergroup $H_\mathcal{M}$.
\begin{rema}
As was the case with Wildberger's construction, the above product on $\mathbb{C}H_\mathcal{M}$ does not always satisfy the associativity. Also, a produced hypergroup by an OQRW with an initial state is not commutative and unital in general.
\end{rema}
\begin{theo}\label{HGrecov}
For any hypergroup $H=\{x_i\}_{i\in
I(H)}$, there is an OQRW $\mathcal{M}$ on a distance set with an initial state $\rho^{(0)}$ producing a hypergroup $H_\mathcal{M}$ whose structure constants coincide with the ones of $H$, and hence $H_\mathcal{M}$ is isomorphic to $H$. 
\end{theo}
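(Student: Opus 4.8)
The plan is to imitate the recovery of Markov chains by OQRWs from \cite[Section 6]{apss12}, encoding the structure constants of $H$ directly into scalar jump operators and placing the initial state at the unit $x_0$. Concretely, I would set $\mathcal{D}=I(H)$, take $\mathcal{H}$ to be any separable Hilbert space (for instance $\mathcal{H}=\mathbb{C}$), fix a state $\omega\in\mathcal{B}_1(\mathcal{H})_{+,1}$, and let $\mathcal{K}$ carry the orthonormal basis $\{\ket{n}\}_{n\in\mathcal{D}}$. For $i,j,k\in\mathcal{D}$ define
\[
B_{i,j;k}=\sqrt{q_{k,j}^{\,i}}\;{\bf 1}.
\]
Since $x_k\circ x_j=\sum_{i}q_{k,j}^{\,i}x_i$ is a probability measure of finite support, we get $\sum_{i\in\mathcal{D}}B_{i,j;k}^*B_{i,j;k}=\left(\sum_i q_{k,j}^{\,i}\right){\bf 1}={\bf 1}$ for each $j,k$, the sum being finite so that \eqref{OQRWB} holds with no convergence issue. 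Hence $\{B_{i,j;k}\}$ is a legitimate family of jump operators and $\mathcal{M}=\{\mathcal{M}_k\}_{k\in\mathcal{D}}$ is an OQRW on $\mathcal{D}$. As initial state I take $\rho^{(0)}=\omega\otimes\ket{0}\!\bra{0}$, so that $\rho_0^{(0)}=\omega$ and $\rho_i^{(0)}=O$ for $i\neq0$.

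Next I would compute the two-step distribution using Proposition \ref{oqwlemm}. Because $x_0$ is the unit of $H$ we have $q_{j,0}^{\,a}=\delta_{j,a}$, so the first step is deterministic:
\[
\mathcal{M}_j(\rho^{(0)})=\sum_{a\in\mathcal{D}}B_{a,0;j}\,\omega\,B_{a,0;j}^*\otimes\ket{a}\!\bra{a}=\omega\otimes\ket{j}\!\bra{j},
\]
that is, the walker sits at position $j$. Applying $\mathcal{M}_i$ afterwards and reading off the component $\rho_m^{(2;(j,i))}$ of $\rho^{(2;(j,i))}=\mathcal{M}_i\circ\mathcal{M}_j(\rho^{(0)})$ gives
\[
\rho_m^{(2;(j,i))}=B_{m,j;i}\,\omega\,B_{m,j;i}^*=q_{i,j}^{\,m}\,\omega,
\]
whence $\tilde{p}_{i,j}^{\,m}={\rm Tr}(\rho_m^{(2;(j,i))})=q_{i,j}^{\,m}\,{\rm Tr}(\omega)=q_{i,j}^{\,m}$. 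In particular $\{m\in\mathcal{D}\mid\tilde p_{i,j}^{\,m}\neq0\}$ is finite, so the product on $\mathbb{C}H_\mathcal{M}$ is well defined and satisfies $z_i\circ z_j=\sum_m q_{i,j}^{\,m}z_m$; the structure constants of $\mathcal{M}$ with $\rho^{(0)}$ thus coincide with those of $H$.

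Finally I would transport the remaining hypergroup datum. Letting $\sigma$ be the permutation of $I(H)$ with $x_i^*=x_{\sigma(i)}$, define the involution on $\mathbb{C}H_\mathcal{M}$ by $z_i^*=z_{\sigma(i)}$. Since $(H_\mathcal{M},\circ)$ and $(H,\circ)$ have identical structure constants and a common unit (again from $q_{i,0}^{\,m}=\delta_{i,m}=q_{0,i}^{\,m}$), the linear extension of $z_i\mapsto x_i$ is a unital $*$-homomorphism; all three axioms of a discrete hypergroup for $(H_\mathcal{M},\circ,*)$ therefore follow from those for $H$, so $\mathcal{M}$ with $\rho^{(0)}$ produces the hypergroup $H_\mathcal{M}$, and bijectivity of $z_i\mapsto x_i$ upgrades this $*$-homomorphism to an isomorphism $H_\mathcal{M}\cong H$. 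I expect there to be no deep obstacle here; the only genuine work is bookkeeping — keeping the three indices of $B_{i,j;k}$ straight, respecting the reversed order in the tuple defining $\rho^{(2;(l,k))}$, and verifying \eqref{OQRWB}. The conceptual point that makes the computation collapse to a single product $q_{i,j}^{\,m}$ is the use of the unit to render the first step deterministic, and it is worth noting in passing that this derivation of $\tilde p_{i,j}^{\,m}=q_{i,j}^{\,m}$ never invokes associativity.
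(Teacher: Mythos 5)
Your proposal is correct and follows essentially the same route as the paper: the paper sets $B_{i,j;k}=\sqrt{q_{k,j}^i}\,U_{i,j;k}$ for arbitrary isometries $U_{i,j;k}$ and an arbitrary $\rho_0^{(0)}\in\mathcal{B}_1(\mathcal{H})_{+,1}$, of which your choice $U_{i,j;k}={\bf 1}$ is just a special case, and the two-step computation using the unit $x_0$ to make the first jump deterministic is identical. The index bookkeeping $\tilde{p}_{i,j}^{\,m}={\rm Tr}(\rho_m^{(2;(j,i))})=q_{i,j}^{\,m}$ and the transport of the involution via $z_i\mapsto x_i$ match the paper's argument exactly.
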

\begin{proof}
Let $\{q_{i,j}^k\}_{i,j,k\in
I(H)}$ be the structure constants of $H$. Suppose $\mathcal{D}$ is the distance set $I(H)$ associated with $H$ and $\mathcal{K}$ is a separable Hilbert space with an orthonormal basis $\{\ket{n}\}_{n\in\mathcal{D}}$. Take an arbitrary isometry operator $U_{i,j;k}$ on a Hilbert space $\mathcal{H}$ for $i,j,k\in\mathcal{D}$ and an arbitrary operator $\rho_0^{(0)}\in\mathcal{B}_1(\mathcal{H})_{+,1}$.

We define an operator $B_{i,j;k}\in\mathcal{B}(\mathcal{H})$ for each $i,j,k\in\mathcal{D}$ and an initial state $\rho^{(0)}\in\mathcal{B}_1(\mathcal{H}\otimes\mathcal{K})_{+,1}$ by
\[
B_{i,j;k}=\sqrt{q_{k,j}^i}U_{i,j;k},\quad
\rho^{(0)}=\rho_0^{(0)}\otimes\ket{0}\!\bra{0}.
\]
Then it is clear that the family $\{B_{i,j;k}\}_{i,j,k\in\mathcal{D}}$ satisfies \eqref{OQRWB}. Let $\mathcal{M}_H=\{\mathcal{M}_k\}_{k\in\mathcal{D}}$ be the OQRW on $\mathcal{D}$ associated with the family $\{B_{i,j;k}\}_{i,j,k\in\mathcal{D}}$. Since $x_0$ is the unit of the algebra $\mathbb{C}H$, we have
\[
\rho^{(1;(l))}=\mathcal{M}_l(\rho^{(0)})=\sum_{i\in\mathcal{D}}B_{i,0;l}\rho_0^{(0)}B_{i,0;l}^*\otimes\ket{i}\!\bra{i}=U_{l,0;l}\rho_0^{(0)}U_{l,0;l}^*\otimes\ket{l}\!\bra{l},
\]
and hence
\begin{align*}
&\rho^{(2;(l,k))}=\mathcal{M}_k\circ\mathcal{M}_l(\rho^{(0)})=\sum_{i\in\mathcal{D}}\sum_{j\in\mathcal{D}}B_{i,j;k}\rho_j^{(1;(l))}B_{i,j;k}^*\otimes\ket{i}\!\bra{i}\\
&=\sum_{i\in\mathcal{D}}B_{i,l;k}\rho_l^{(1;(l))}B_{i,l;k}^*\otimes\ket{i}\!\bra{i}=\sum_{i\in\mathcal{D}}q_{k,l}^iU_{i,l;k}U_{l,0;l}\rho_0^{(0)}U_{l,0;l}^*U_{i,l;k}^*\otimes\ket{i}\!\bra{i}.
\end{align*}
Thus, we have $\tilde{p}_{k,l}^m={\rm
Tr}(\rho_m^{2;(l,k)})=q_{k,l}^m$ for all $l,k,m\in\mathcal{D}$.

By an involution $*$ on $\mathbb{C}H_\mathcal{M}$ induced from the one on $\mathbb{C}H$ via a bijection $:z_i\mapsto
x_i$, the triple $(\mathbb{C}H_\mathcal{M},\circ,*)$ forms a hypergroup which is isomorphic to the hypergroup $H$.
\end{proof}
\begin{rema}
For a hypergroup $H$, the OQRW $\mathcal{M}_H$ on $I(H)$ defined in the proof of the previous theorem is not unique and depends on the choice of isometry operators $U_{i,j;k}$'s, however we simply call $\mathcal{M}_H$ the OQRW on $I(H)$ derived from $H$. Since the family $\{B_{i,j;k}\}_{i,j,k\in
I(H)}$ satisfies the condition $(2)$ in Proposition \ref{inde}, the family $\mathcal{M}_H$ is linear independent on $\mathcal{L}(\mathcal{B}_{1}(\mathcal{H}))$.

We can also obtain an OQRW on a distance set from a given pre-hypergroup by the same way, and it satisfies the condition $(2)$ in Proposition \ref{inde}.

\end{rema}

Let $(\Gamma,v_0)$ be a pointed graph. In the following two examples, we shall construct an OQRW on the distance set $I(\Gamma,v_0)$ under the idea that a quantum walker jumps between vertices in $\Gamma$ from $v_0$ and a measured distance is one between a base point and a vertex at which the walker arrives. Here, we shall need to consider the structure of the graph. Then it is a little natural to put as
\begin{equation}\label{gassu}
B_{i,0;k}=\begin{cases}
{\bf
1}\quad(i=k)\\
O\quad(i\neq
k)
\end{cases}
\end{equation}
for all $i,k\in
I(\Gamma,v_0)$, as a special case of the condition $(2)$ in Proposition \ref{inde}.
\begin{exam}\label{ex1}
Suppose a quantum walker jumps between vertices in the Cayley graph $\mathcal{C}_4$ in Example \ref{cycle4} from $\overline{0}$. Let $\mathcal{D}=I(\mathcal{C}_4,\overline{0})=\{0,1,2\}$, $\mathcal{H}=\mathbb{C}^2$ and $\mathcal{K}=\mathbb{C}^3$. We define an OQRW on $\mathcal{D}$ by
\[
B_{i,j;k}=\begin{cases}
B\quad((i,j,k)=(0,1,1))\\
C\quad((i,j,k)=(2,1,1))\\
{\bf
1}\quad((i,j,k)=(1,2,1),(1,1,2),(0,2,2))\\
{\bf
1}\quad(j=0,i=k\mbox{ or }k=0,i=j)\\
O\quad(\mbox{otherwise})
\end{cases}
\]
for some matrices $B,C$ acting on $\mathcal{H}$ with $B^*B+C^*C={\bf
1}$.

For an initial state $\rho^{(0)}=\rho_0^{(0)}\otimes\ket{0}\!\bra{0}$, we have
\begin{align*}
\rho^{(2;(k,l))}=\begin{cases}
B\rho_0^{(0)}B^*\otimes\ket{0}\!\bra{0}+C\rho_0^{(0)}C^*\otimes\ket{2}\!\bra{2}\quad((k,l)=(1,1)),\\
\rho_0^{(0)}\otimes\ket{0}\!\bra{0}\quad((k,l)=(0,0),(2,2)),\\
\rho_0^{(0)}\otimes\ket{1}\!\bra{1}\quad((k,l)=(0,1),(1,0),(1,2),(2,1)),\\
\rho_0^{(0)}\otimes\ket{2}\!\bra{2}\quad((k,l)=(0,2),(2,0)).
\end{cases}
\end{align*}
We put
\[
B=\frac{1}{\sqrt{3}}\begin{pmatrix}1&1\\0&1\end{pmatrix},\quad
C=\frac{1}{\sqrt{3}}\begin{pmatrix}1&0\\-1&1\end{pmatrix},\quad\rho_0^{(0)}=\begin{pmatrix}x&0\\0&1-x\end{pmatrix}
\]
for some $0\leq
x\leq1$. For example, we have
\begin{align*}
\tilde{p}_{1,1}^0={\rm
Tr}(B\rho_0^{(0)}B^*)=\frac{1}{3}(2-x),\quad\tilde{p}_{1,1}^2={\rm
Tr}(C\rho_0^{(0)}C^*)=\frac{1}{3}(1+x).
\end{align*}
the numbers $\{\tilde{p}_{i,j}^k\}_{i,j,k\in\mathcal{D}}$ are the structure constants of the hermitian hypergroup $H_\mathcal{M}=\{z_0,z_1,z_2\}$ with the following structure:
\begin{align*}
z_1\circ
z_1=\frac{1}{3}(2-x)z_0+\frac{1}{3}(1+x)z_2;\quad
z_1\circ
z_2=z_1;\quad
z_2\circ
z_2=z_0.
\end{align*}
\end{exam}
\begin{exam}\label{ex2}
Suppose a quantum walker jumps between vertices in the Cayley graph ${\rm
Cay}(\mathbb{Z},\{\pm1\})$ in Example \ref{lattice} from $0$. Put $\mathcal{D}=I({\rm
Cay}(\mathbb{Z},\{\pm1\}),0)=\mathbb{N}_0$. Let $\mathcal{H}$ be a separable Hilbert space and $\mathcal{K}$ a Hilbert space with an orthonormal basis $\{\ket{n}\}_{n=0}^\infty$. We define an OQRW on $\mathcal{D}$ by
\[
B_{i,j;k}=\begin{cases}
B_k\quad(i=j+k,j\neq0,k\neq0)\\
C_k\quad(i=|j-k|,j\neq0,k\neq0)\\
{\bf
1}\quad(i=k,j=0\mbox{ or }i=j,k=0)\\
O\quad(\mbox{otherwise})
\end{cases}
\]
for some family $\{B_k,C_k\}_{k=1}^\infty$ of bounded operators on $\mathcal{H}$ satisfying $B_k^*B_k+C_k^*C_k={\bf
1}$ for all $k\in\mathbb{N}$.

For an initial state $\rho^{(0)}=\rho_0^{(0)}\otimes\ket{0}\!\bra{0}$, we have
\begin{align*}
\rho^{(2;(k,l))}=\begin{cases}
B_l\rho_0^{(0)}B_l^*\otimes\ket{k+l}\!\bra{k+l}+C_l\rho_0^{(0)}C_l^*\otimes\ket{|k-l|}\!\bra{|k-l|}\quad(k\neq0),\\
\rho_0^{(0)}\otimes\ket{l}\!\bra{l}\quad(k=0),
\end{cases}
\end{align*}
and hence 
\[
\tilde{p}_{k,l}^m=\begin{cases}
    {\rm
    Tr}(B_l\rho_0^{(0)}B_l^*)\quad(m=k+l,k\neq0),\\
    {\rm
    Tr}(C_l\rho_0^{(0)}C_l^*)\quad(m=|k-l|,k\neq0),\\
    1\quad(m=l,k=0\mbox{ or }m=k,l=0),\\
    0\quad(\mbox{othewise}).
  \end{cases}
\]
The OQRW $\mathcal{M}$ on $\mathbb{N}_0$ with the initial state $\rho^{(0)}$ produces a hypergroup if and only if
\[
{\rm
Tr}(B_i\rho_0^{(0)}B_i^*)={\rm
Tr}(C_i\rho_0^{(0)}C_i^*)=\frac{1}{2}
\]
for all $i\in\mathbb{N}$. Indeed, we can check that if there exist non-negative numbers $\{p_k,q_k\}_{k=1}^\infty$ with $p_k+q_k=1$ for all $k\in\mathbb{N}$ and a hypergroup $H=\{x_i\}_{i=0}^\infty$ has a product structure
\[
x_i\circ
x_j=p_jx_{|i-j|}+q_jx_{i+j}
\]
for each $i,j\in\mathbb{N}$, then $p_k=q_k=\frac{1}{2}$ holds for all $k\in\mathbb{N}$.
\end{exam}


\section{Hypergroup structures of OQRWs}\label{Hypergroup structure of OQRW}

Recall that each distance distribution of a random walk on a pointed graph $(\Gamma,v_0)$ equipped with the symmetry condition $(S)$ in Theorem \ref{ems}, can be described by the structure of the pre-hypergroup $H(\Gamma,v_0)$. In addition, if $(\Gamma,v_0)$ produces a hypergroup $H(\Gamma,v_0)$ then each product of transition matrices associated with $H(\Gamma,v_0)$ can be decomposable as
\begin{equation}\label{maincoro'}
P_{k_n}\cdots
P_{k_1}=\sum_{m\in
I(\Gamma,v_0)}q_{k_n,\ldots,k_1}^mP_m,
\end{equation}
where $q_{k_n,\ldots,k_1}^m$'s are the multi-structure constants of $H(\Gamma,v_0)$.

In this section, for an OQRW $\mathcal{M}$ on a distance set, we will show that an analogous result of \eqref{maincoro'} holds if and only if $\mathcal{M}$ has a property like (pre-)hypergroups. As a consequence, we can also compute a distribution of such an OQRW $\mathcal{M}$ on a distance set by a (non-associative) algebraic structure.

Let $\mathcal{D}$ be a distance set and $\{Q_{i,j}^k\}_{i,j,k\in\mathcal{D}}$ a family of numbers $Q_{i,j}^k\geq0$ such that $Q_{i,j}^k=0$ for all but finitely many $k\in\mathcal{D}$ and $\sum_{k\in\mathcal{D}}Q_{i,j}^k=1$ for each $i,j\in\mathcal{D}$. We define a product on the free vector space $\mathbb{C}A$ of a family $A=\{Z_i\}_{i\in\mathcal{D}}$ of dummy symbols $Z_i$, by
\[
Z_i\circ
Z_j=\sum_{k\in\mathcal{D}}Q_{i,j}^kZ_k
\]
for each $i,j\in\mathcal{D}$. In general, this product is not always associative, commutative and unital. We also call the family $\{Q_{i,j}^k\}_{i,j,k\in\mathcal{D}}$ the structure constants of the (non-associative) algebra $\mathbb{C}A$. For each $k_1,\ldots,k_n\in\mathcal{D}$, a distribution $(Q_{k_1,\ldots,k_n}^m)_{m\in\mathcal{D}}$ on $\mathcal{D}$ is defined by
\[
(((Z_{k_1}\circ
Z_{k_2})\circ
Z_{k_3})\circ\cdots)\circ
Z_{k_{n-1}})\circ
Z_{k_n}=\sum_{m\in\mathcal{D}}Q_{k_1,\ldots,k_n}^mZ_m,
\]
as the multi-structure constants of a pre-hypergroup.

\begin{theo}\label{distM}
Let $\mathcal{M}=\{\mathcal{M}_k\}_{k\in\mathcal{D}}$ be an OQRW on a distance set $\mathcal{D}$ and $\{Q_{i,j}^k\}_{i,j,k\in\mathcal{D}}$ the above numbers. The family $\{B_{i,j;k}\}_{i,j,k\in\mathcal{D}}$ satisfies
\begin{equation}\label{HB}
\sum_{m\in\mathcal{D}}B_{m,j;l}^*B_{i,m;k}^*B_{i,m;k}B_{m,j;l}=\sum_{m\in\mathcal{D}}Q_{k,l}^mB_{i,j;m}^*B_{i,j;m}
\end{equation}
for all $i,j,m,k,l\in\mathcal{D}$ if and only if
\[
p^{(2;(k_1,k_2))}=d\left(\sum_{m\in
\mathcal{D}}Q_{k_2,k_1}^m\mathcal{M}_m(\rho)\right)
\]
for all $k_1,k_2\in\mathcal{D}$ and a state $\rho\in\mathcal{B}_{1}(\mathcal{H}\otimes\mathcal{K})_{+,1}$. Then we have
\[
p^{(n;{\bf
k})}=d\left(\sum_{m\in
\mathcal{D}}Q_{k_n,\cdots,k_1}^m\mathcal{M}_m(\rho)\right)
\]
for any tuple ${\bf
k}=(k_1,\ldots,k_n)\in\mathcal{D}^n$ and a state $\rho\in\mathcal{B}_{1}(\mathcal{H}\otimes\mathcal{K})_{+,1}$.
\end{theo}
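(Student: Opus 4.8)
The plan is to lift the distribution statement to an \emph{operator-level} identity and to prove that identity by induction on $n$, taking traces only at the very end. For a state $\rho$ write $\rho_j=U_j^*\rho U_j$ as in Proposition \ref{oqwlemm}, so that $\mathcal M_k(\rho)=\sum_{i,j}B_{i,j;k}\rho_jB_{i,j;k}^*\otimes\ket{i}\!\bra{i}$. Iterating this formula (each $\mathcal M_k$ sends a diagonal state $\sum_i\sigma_i\otimes\ket{i}\!\bra{i}$ to the diagonal state with $i$-block $\sum_jB_{i,j;k}\sigma_jB_{i,j;k}^*$), the $i$-block of $\rho^{(n;\mathbf k)}$ is the sum over all paths $j_0\to j_1\to\cdots\to j_{n-1}\to i$ of $K\rho_{j_0}K^*$ with $K=B_{i,j_{n-1};k_n}\cdots B_{j_1,j_0;k_1}$. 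By cyclicity of the trace, $p_i^{(n;\mathbf k)}=\sum_{j_0}\mathrm{Tr}\big((\sum_{\mathrm{paths}}K^*K)\rho_{j_0}\big)$, so the whole assertion reduces to the operator identity
\[
(\star_n)\qquad\sum_{j_1,\ldots,j_{n-1}}B_{j_1,j_0;k_1}^*\cdots B_{i,j_{n-1};k_n}^*B_{i,j_{n-1};k_n}\cdots B_{j_1,j_0;k_1}=\sum_{m\in\mathcal D}Q_{k_n,\ldots,k_1}^m B_{i,j_0;m}^*B_{i,j_0;m}
\]
for each fixed $i,j_0\in\mathcal D$; indeed, tracing against $\rho_{j_0}$ and summing over $j_0$ then yields exactly the $i$-th entry of $d(\sum_m Q_{k_n,\ldots,k_1}^m\mathcal M_m(\rho))$.

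The base case $(\star_2)$ is precisely the hypothesis \eqref{HB} (relabel $j\mapsto j_0$, $l\mapsto k_1$, $k\mapsto k_2$, with summed middle index $j_1$), which we are given. For the inductive step I would collapse the two \emph{innermost} jumps $k_n,k_{n+1}$ in $(\star_{n+1})$: summing over the intermediate index $j_n$ and invoking \eqref{HB} replaces the central block $\sum_{j_n}B_{j_n,j_{n-1};k_n}^*B_{i,j_n;k_{n+1}}^*B_{i,j_n;k_{n+1}}B_{j_n,j_{n-1};k_n}$ by $\sum_m Q_{k_{n+1},k_n}^m B_{i,j_{n-1};m}^*B_{i,j_{n-1};m}$. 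For each fixed $m$ the remaining expression is exactly the left-hand side of $(\star_n)$ for the tuple $(k_1,\ldots,k_{n-1},m)$, so the inductive hypothesis turns it into $\sum_{m'}Q_{m,k_{n-1},\ldots,k_1}^{m'}B_{i,j_0;m'}^*B_{i,j_0;m'}$. It then remains to recognise the scalar recursion
\[
\sum_{m\in\mathcal D}Q_{k_{n+1},k_n}^m\,Q_{m,k_{n-1},\ldots,k_1}^{m'}=Q_{k_{n+1},k_n,\ldots,k_1}^{m'},
\]
which follows directly from the definition of the multi-structure constants: in the left-normed product $(\cdots(Z_{k_{n+1}}\circ Z_{k_n})\circ\cdots)\circ Z_{k_1}$ one first evaluates $Z_{k_{n+1}}\circ Z_{k_n}=\sum_m Q_{k_{n+1},k_n}^m Z_m$ and then distributes the subsequent right-multiplications through the sum by bilinearity of $\circ$, so that no associativity is needed.

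I expect the main obstacle to be not the algebra but the analytic bookkeeping when $\mathcal D=\mathbb N_0$: the sums defining $(\star_n)$ are infinite and the identity \eqref{HB} converges only in the strong operator topology. To substitute the collapsed central block into the middle of a longer product and to interchange the summation over $j_n$ with the outer operator multiplications, I would exploit that every summand $K^*K$ is positive, so that the partial sums form an increasing net that may be rearranged freely, together with the trace-norm and strong-operator convergence supplied by Proposition \ref{oqwlemm} (and \cite[Lemma 2.1, 2.2]{apss12}). Once $(\star_n)$ is established for all $n\ge 2$, the desired equality of distributions is immediate by tracing against each $\rho_{j_0}$ and summing over $j_0$, as in the first paragraph.
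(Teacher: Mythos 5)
Your argument covers only one direction of the stated equivalence. You take \eqref{HB} as ``the hypothesis \dots which we are given'' and derive from it the operator identities $(\star_n)$, hence the distribution identities for every $n\ge 2$. That establishes ``\eqref{HB} $\Rightarrow$ $p^{(2;(k_1,k_2))}=d(\sum_m Q_{k_2,k_1}^m\mathcal M_m(\rho))$'' and the second assertion of the theorem, but the theorem is an ``if and only if'': you must also show that if the $n=2$ distribution identity holds for \emph{all} states $\rho$, then \eqref{HB} holds. This converse does not reduce to $(\star_2)$; it requires passing from a scalar (trace) identity back to an operator identity. The paper does this by testing against states of the form $\rho=\rho'\otimes\ket{m}\!\bra{m}$ with $\rho'\in\mathcal B_1(\mathcal H)_{+,1}$ arbitrary, which isolates the single term $j_0=m$ and yields
\[
{\rm Tr}\Bigl(\rho'\sum_{j}B_{j,m;k_1}^*B_{i,j;k_2}^*B_{i,j;k_2}B_{j,m;k_1}\Bigr)={\rm Tr}\Bigl(\rho'\sum_{j}Q_{k_2,k_1}^jB_{i,m;j}^*B_{i,m;j}\Bigr)
\]
for every density operator $\rho'$; the correspondence between density matrices and normal states on $\mathcal B(\mathcal H)$ (\cite[Theorem 2.4.21]{brat-robi87}) then forces the two positive bounded operators to coincide, which is \eqref{HB}. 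Without some such argument your proof of the theorem is incomplete.

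The direction you do prove is correct, and is in substance the paper's argument transposed to the operator level: the paper keeps everything under a trace and runs the induction by peeling off the \emph{first} jump $k_1$ (applying the $n-1$ case to the state $\rho^{(1;(k_1))}$ and then using \eqref{HB} together with $Q_{k_n,\ldots,k_1}^j=\sum_mQ_{k_n,\ldots,k_2}^mQ_{m,k_1}^j$), whereas you collapse the \emph{last two} jumps inside $(\star_{n+1})$ and use $Q_{k_{n+1},\ldots,k_1}^{m'}=\sum_mQ_{k_{n+1},k_n}^mQ_{m,k_{n-1},\ldots,k_1}^{m'}$. Both recursions follow from bilinearity and the left-normed bracketing alone, so neither needs associativity; your bookkeeping of the multi-structure constants is consistent with the paper's convention. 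Your handling of the convergence issues for $\mathcal D=\mathbb N_0$ (positivity of the summands $K^*K$, monotone rearrangement, normality of the trace) is the right way to justify the interchanges, and is if anything more explicit than the paper. So once you add the converse implication, the proof is complete.
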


\begin{proof}
First, since $B_{i,j;k}^*B_{i,j;k}\leq{\bf
1}$ for all $i,j,k\in\mathcal{D}$, the series of the left hand side of \eqref{HB} converges to a positive bounded operator in the $\sigma$-strong operator topology.

For $k_1,k_2\in\mathcal{D}$ and $\rho\in\mathcal{B}_{1}(\mathcal{H}\otimes\mathcal{K})_{+,1}$, we have
\begin{align}
&\label{Mk2}\rho^{(2;(k_1,k_2))}=\mathcal{M}_{k_2}\circ\mathcal{M}_{k_1}(\rho)\\
&=\sum_{i\in
\mathcal{D}}\sum_{j,m\in
\mathcal{D}}B_{i,j;k_2}B_{j,m;k_1}\rho_mB_{j,m;k_1}^*B_{i,j;k_2}^*\otimes\ket{i}\!\bra{i}.\nonumber
\end{align}
where the operators $\rho_j\in\mathcal{B}_1(\mathcal{H})$ is in Proposition \ref{oqwlemm}. Note that the choices of $\rho_j$'s depend on only $\rho$. Fix $i\in
\mathcal{D}$. We have
\begin{align}
&\label{dynd}{\rm
Tr}\left(\sum_{j,m\in
\mathcal{D}}B_{i,j;k_2}B_{j,m;k_1}\rho_mB_{j,m;k_1}^*B_{i,j;k_2}^*\right)\\
&=\sum_{m\in
\mathcal{D}}{\rm
Tr}\left(\rho_m\left(\sum_{j\in\mathcal{D}}B_{j,m;k_1}^*B_{i,j;k_2}^*B_{i,j;k_2}B_{j,m;k_1}\right)\right),\nonumber\\
&\label{algd}\sum_{m\in
\mathcal{D}}\sum_{j\in
\mathcal{D}}Q_{k_2,k_1}^m{\rm
Tr}(B_{i,j;m}\rho_jB_{i,j;m}^*)\\
&=\sum_{j\in
\mathcal{D}}\sum_{m\in
\mathcal{D}}Q_{k_2,k_1}^j{\rm
Tr}(B_{i,m;j}\rho_mB_{i,m;j}^*)\nonumber\\
&=\sum_{m\in
\mathcal{D}}{\rm
Tr}\left(\rho_m\sum_{j\in
\mathcal{D}}Q_{k_2,k_1}^jB_{i,m;j}^*B_{i,m;j}\right).\nonumber
\end{align}
Thus, if the family $\{B_{i,j;k}\}_{i,j,k\in\mathcal{D}}$ satisfies \eqref{HB} then the equations \eqref{dynd} and \eqref{algd} imply that
\begin{align*}
&{\rm
Tr}\left(\sum_{j,m\in
\mathcal{D}}B_{i,j;k_2}B_{j,m;k_1}\rho_mB_{j,m;k_1}^*B_{i,j;k_2}^*\right)=\sum_{m\in
\mathcal{D}}\sum_{j\in
\mathcal{D}}Q_{k_2,k_1}^m{\rm
Tr}(B_{i,j;m}\rho_jB_{i,j;m}^*),
\end{align*}
and hence the distribution $p^{(2;(k_1,k_2))}$ of $\mathcal{M}_{k_2}\circ\mathcal{M}_{k_1}(\rho)$ and the one of $\sum_{m\in
\mathcal{D}}Q_{k_2,k_1}^m\mathcal{M}_m(\rho)$ coincide by \eqref{Mk2}.

Conversely, suppose the distribution $p^{(2;(k_1,k_2))}$ of $\mathcal{M}_{k_2}\circ\mathcal{M}_{k_1}(\rho)$ is the one of $\sum_{m\in
\mathcal{D}}Q_{k_2,k_1}^m\mathcal{M}_m(\rho)$ for all $k_1,k_2\in\mathcal{D}$ and $\rho\in\mathcal{B}_{1}(\mathcal{H}\otimes\mathcal{K})_{+,1}$. For each $m\in\mathcal{D}$ and an arbitrary operator $\rho'\in\mathcal{B}_{1}(\mathcal{H})_{+,1}$, we put $\rho=\rho'\otimes\ket{m}\!\bra{m}$. Then we have
\[
{\rm
Tr}\left(\rho'\sum_{j\in
\mathcal{D}}B_{j,m;k_1}^*B_{i,j;k_2}^*B_{i,j;k_2}B_{j,m;k_1}\right)={\rm
Tr}\left(\rho'\sum_{j\in
\mathcal{D}}Q_{k_2,k_1}^jB_{i,m;j}^*B_{i,m;j}\right).
\]
By the correspondence between density matrices and normal states on the von Neumann algebra $\mathcal{B}(\mathcal{H})$ in \cite[Theorem 2.4.21]{brat-robi87}, the equation \eqref{HB} holds.

We shall show the second assertion by induction on $n$ under the condition \eqref{HB}. For $n=2$, we have shown it. Assume that the distribution $p^{(n;(k_1,\ldots,k_{n-1}))}$ of $\mathcal{M}_{k_{n-1}}\circ\cdots\circ\mathcal{M}_{k_1}(\rho)$ coincides with the one of $\sum_{m\in
\mathcal{D}}q_{k_{n-1},\cdots,k_1}^m\mathcal{M}_m(\rho)$. By the induction hypothesis, the $i$-coefficient of the distribution $p^{(n;{\bf
k})}$ of
\[
\mathcal{M}_{k_n}\circ\cdots\circ\mathcal{M}_{k_2}(\mathcal{M}_{k_1}(\rho))=\mathcal{M}_{k_n}\circ\cdots\circ\mathcal{M}_{k_2}(\rho^{(1;(k_1))})
\]
coincides with
\begin{align*}
&\sum_{m\in\mathcal{D}}Q_{k_n,\ldots,k_2}^m{\rm
Tr}\left(\sum_{j\in\mathcal{D}}B_{i,j;m}\rho_j^{(1;(k_1))}B_{i,j;m}^*\right)\\
&=\sum_{l,m\in\mathcal{D}}Q_{k_n,\ldots,k_2}^m{\rm
Tr}\left(\rho_l\sum_{j\in\mathcal{D}}B_{j,l;k_1}^*B_{i,j;m}^*B_{i,j;m}B_{j,l;k_1}\right)\\
&=\sum_{l,m\in\mathcal{D}}Q_{k_n,\ldots,k_2}^m{\rm
Tr}\left(\rho_l\sum_{j\in\mathcal{D}}Q_{m,k_1}^jB_{i,l;j}^*B_{i,l;j}\right)\\
&=\sum_{j,l\in\mathcal{D}}\sum_{m\in\mathcal{D}}Q_{k_n,\ldots,k_2}^mQ_{m,k_1}^j{\rm
Tr}\left(B_{i,l;j}\rho_lB_{i,l;j}^*\right).
\end{align*}
We have $Q_{k_n,\ldots,k_1}^j=\sum_{m\in\mathcal{D}}Q_{k_n,\ldots,k_2}^mQ_{m,k_1}^j$ by the definition of $Q_{k_1,\ldots,k_n}^m$'s, and hence the above equals to
\begin{align*}
\sum_{j,l\in\mathcal{D}}Q_{k_n,\ldots,k_1}^j{\rm
Tr}\left(B_{i,l;j}\rho_lB_{i,l;j}^*\right)=\sum_{m\in\mathcal{D}}\sum_{j\in\mathcal{D}}Q_{k_n,\ldots,k_1}^m{\rm
Tr}\left(B_{i,j;m}\rho_lB_{i,j;m}^*\right)
\end{align*}
which is the $i$-coefficient of the distribution of $\sum_{m\in
\mathcal{D}}Q_{k_n,\cdots,k_1}^m\mathcal{M}_m(\rho)$. By induction, we have shown the second assertion.
\end{proof}

\begin{rema}
If an OQRW $\mathcal{M}$ on a distance set $\mathcal{D}$ satisfies \eqref{HB} for some $\{Q_{i,j}^k\}_{i,j,k\in\mathcal{D}}$ and \eqref{gassu} for all $i,k\in\mathcal{D}$ then each $B_{i,j,k}$ is a scalar multiplication of an isometry operator. 
\end{rema}

Let $\{Q_{i,j}^k\}_{i,j,k\in\mathcal{D}}$ be structure constants of an arbitrary (non-associative) algebra $\mathbb{C}A$. As the OQRW on $I(H)$ derived from a hypergroup $H$, we define an OQRW on $\mathcal{D}$ by
\[
B_{i,j;k}=\sqrt{Q_{k,j}^i}U_{i,j;k}
\]
for an arbitrary isometry operator $U_{i,j;k}$. Then the equation \eqref{HB} means the associativity of $\mathbb{C}A$, and hence we have the following corollary:


\begin{coro}\label{distH}
Let $\mathcal{M}_H=\{\mathcal{M}_k\}_{k\in
I(H)}$ be the OQRW on $I(H)$ derived from a hypergroup $H$. For any state $\rho\in\mathcal{B}_1(\mathcal{H}\otimes\mathcal{K})_{+,1}$ and tuple ${\bf
k}=(k_1,\ldots,k_n)\in
I(H)^n$ with $n\geq2$, we have
\[
p^{(n;{\bf
k})}=d\left(\sum_{m\in
I(H)}q_{k_n,\cdots,k_1}^m\mathcal{M}_m(\rho)\right),
\]
where $q_{k_n,\cdots,k_1}^m$ is the multi-structure constant of $H$.
\end{coro}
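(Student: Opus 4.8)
The plan is to read off the corollary from Theorem \ref{distM}: I will verify that the family defining $\mathcal{M}_H$ satisfies the key identity \eqref{HB} with the choice $Q_{i,j}^k=q_{i,j}^k$, where $\{q_{i,j}^k\}$ are the structure constants of $H$, and then invoke the second assertion of that theorem. Recall that $\mathcal{M}_H$ is built from the operators $B_{i,j;k}=\sqrt{q_{k,j}^i}\,U_{i,j;k}$ for isometries $U_{i,j;k}$ on $\mathcal{H}$.

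First I would use that each $U_{i,j;k}$ is an isometry, so that $B_{i,j;k}^*B_{i,j;k}=q_{k,j}^i\,{\bf 1}$. Substituting this into \eqref{HB} collapses every operator factor to a scalar. The left-hand side becomes
\[
\sum_{m\in\mathcal{D}}B_{m,j;l}^*B_{i,m;k}^*B_{i,m;k}B_{m,j;l}
=\Bigl(\sum_{m\in\mathcal{D}}q_{k,m}^i\,q_{l,j}^m\Bigr){\bf 1},
\]
while the right-hand side becomes
\[
\sum_{m\in\mathcal{D}}q_{k,l}^m\,B_{i,j;m}^*B_{i,j;m}
=\Bigl(\sum_{m\in\mathcal{D}}q_{k,l}^m\,q_{m,j}^i\Bigr){\bf 1}.
\]
Hence \eqref{HB} reduces to the numerical identity $\sum_{m}q_{k,m}^i\,q_{l,j}^m=\sum_{m}q_{k,l}^m\,q_{m,j}^i$ for all $i,j,k,l$.

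The second step is to identify this identity with the associativity of $\mathbb{C}H$. Expanding $(x_k\circ x_l)\circ x_j$ and $x_k\circ(x_l\circ x_j)$ in the basis $\{x_i\}$ and comparing the coefficient of $x_i$ gives exactly the two sides of that identity. Since $H$ is a genuine hypergroup rather than merely a pre-hypergroup, its product is associative, so the identity holds and \eqref{HB} is verified. The finiteness of $\{m:q_{i,j}^m\neq0\}$ and the normalization $\sum_m q_{i,j}^m=1$ are part of the hypergroup axioms, so $\{Q_{i,j}^k\}=\{q_{i,j}^k\}$ is of the admissible type demanded by Theorem \ref{distM}; moreover the multi-structure constants $Q_{k_n,\ldots,k_1}^m$ built from these coincide with the hypergroup multi-structure constants $q_{k_n,\ldots,k_1}^m$, since both are defined by the same iterated left-to-right products. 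Theorem \ref{distM} then yields $p^{(n;{\bf k})}=d\bigl(\sum_{m\in I(H)}q_{k_n,\ldots,k_1}^m\mathcal{M}_m(\rho)\bigr)$ for every $n\geq2$.

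I do not anticipate a real obstacle; the only delicate point is convergence when $\mathcal{D}=\mathbb{N}_0$. As observed at the start of the proof of Theorem \ref{distM}, the bound $B_{i,j;k}^*B_{i,j;k}\leq{\bf 1}$ makes both series in \eqref{HB} converge in the $\sigma$-strong operator topology, and since all the $q$'s are nonnegative the scalar rearrangements above are legitimate. With \eqref{HB} established, the corollary follows immediately.
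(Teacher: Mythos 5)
Your proposal is correct and follows exactly the paper's own route: the paper's (terse) justification is precisely that with $B_{i,j;k}=\sqrt{q_{k,j}^i}U_{i,j;k}$ the identity \eqref{HB} reduces, via $B_{i,j;k}^*B_{i,j;k}=q_{k,j}^i{\bf 1}$, to the associativity of $\mathbb{C}H$, after which Theorem \ref{distM} applies. You have merely written out the scalar computation and the matching of multi-structure constants that the paper leaves implicit.
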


As the special case of the previous corollary, we have the following corollary:

\begin{coro}\label{QWP}
Suppose a pointed graph $(\Gamma,v_0)$ produces a hermitian hypergroup $H(\Gamma,v_0)$. Let $\{P_k\}_{k\in
I(\Gamma,v_0)}$ be the probability transition matrices associated with $H(\Gamma,v_0)$, and $\mathcal{M}_{H(\Gamma,v_0)}$ the OQRW on the distance set $I(\Gamma,v_0)$ derived from $H(\Gamma,v_0)$.

For a tuple ${\bf
k}=(k_1,\ldots,k_n)\in
I(\Gamma,v_0)^n$ with $n\geq2$ and an initial state $\rho^{(0)}=\sum_{i\in
I(\Gamma,v_0)}\rho_i^{(0)}\otimes\ket{i}\!\bra{i}\in\mathcal{B}_1(\mathcal{H}\otimes\mathcal{K})_{+,1}$, we have
\[
p^{(n;{\bf
k})}=(P_{k_n}\cdots
P_{k_1})p^{(0)},
\]
where we regard the raw vector $(P_{k_n}\cdots
P_{k_1})p^{(0)}$ as a column vector.
\end{coro}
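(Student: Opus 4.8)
The plan is to combine Corollary \ref{distH}, applied to the hypergroup $H(\Gamma,v_0)$, with the classical matrix identity coming from \eqref{matrixproduct}, matching the two via a single-step computation of $d(\mathcal{M}_m(\rho^{(0)}))$. Since $\mathcal{M}_{H(\Gamma,v_0)}$ is by definition the OQRW derived from the hermitian hypergroup $H(\Gamma,v_0)$, Corollary \ref{distH} applies verbatim and gives
\[
p^{(n;{\bf k})}=d\left(\sum_{m\in I(\Gamma,v_0)}q_{k_n,\ldots,k_1}^m\,\mathcal{M}_m(\rho^{(0)})\right),
\]
where the $q_{k_n,\ldots,k_1}^m$ are the multi-structure constants of $H(\Gamma,v_0)$, i.e. those built from the Wildberger constants $\{p_{i,j}^k\}$. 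Because the distribution map $d$ is linear, it suffices to compute $d(\mathcal{M}_m(\rho^{(0)}))$ for each single $m$ and then to recognize the resulting linear combination of the $P_m$'s as the product $P_{k_n}\cdots P_{k_1}$.

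The key step is the one-step computation. Writing $B_{i,j;m}=\sqrt{q_{m,j}^i}\,U_{i,j;m}$ with $U_{i,j;m}$ isometric, as in the construction of the derived OQRW, and noting that for the diagonal state $\rho^{(0)}=\sum_i\rho_i^{(0)}\otimes\ket{i}\!\bra{i}$ the operators $\rho_j$ of Proposition \ref{oqwlemm} are exactly the $\rho_j^{(0)}$, the $i$-th coordinate of the distribution is
\[
d(\mathcal{M}_m(\rho^{(0)}))_i=\sum_{j\in I(\Gamma,v_0)}{\rm Tr}\big(B_{i,j;m}\rho_j^{(0)}B_{i,j;m}^*\big)=\sum_{j\in I(\Gamma,v_0)}q_{m,j}^i\,{\rm Tr}(\rho_j^{(0)}),
\]
using $U_{i,j;m}^*U_{i,j;m}={\bf 1}$ and cyclicity of the trace. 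Since $q_{m,j}^i=p_{m,j}^i$ and $p_j^{(0)}={\rm Tr}(\rho_j^{(0)})$, the right-hand side equals $\sum_j p_{m,j}^i p_j^{(0)}$, which is precisely the $i$-th coordinate of the vector obtained by letting $P_m$ act on $p^{(0)}$ in the sense defined just above Corollary \ref{maincoro}. Hence $d(\mathcal{M}_m(\rho^{(0)}))=P_m p^{(0)}$ for every $m$.

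Finally, linearity of $d$ gives $p^{(n;{\bf k})}=\big(\sum_m q_{k_n,\ldots,k_1}^m P_m\big)p^{(0)}$, and the matrix identity $\sum_m q_{k_n,\ldots,k_1}^m P_m=P_{k_n}\cdots P_{k_1}$, obtained by iterating \eqref{matrixproduct}, identifies this with $(P_{k_n}\cdots P_{k_1})p^{(0)}$. I emphasize that this last identity rests only on the algebra isomorphism between $\mathbb{C}H(\Gamma,v_0)$ and the commutative matrix algebra with basis $\mathcal{P}_{H(\Gamma,v_0)}$ together with commutativity of the hermitian hypergroup, so no additional symmetry hypothesis such as $(S)$ is needed here.

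I expect the main obstacle to be the second step rather than the bookkeeping: one must verify that the degrees-of-freedom Hilbert space $\mathcal{H}$ becomes invisible at the level of distributions — which is exactly where the isometry property of the $U_{i,j;m}$ is essential, as it collapses ${\rm Tr}(U_{i,j;m}\rho_j^{(0)}U_{i,j;m}^*)$ to ${\rm Tr}(\rho_j^{(0)})$ — and that the index conventions for $B_{i,j;m}$, the structure constants $p_{m,j}^i$, and the action of $P_m$ on $\ell^2(I(\Gamma,v_0))$ are aligned so that the traces reproduce the genuine matrix action rather than its transpose.
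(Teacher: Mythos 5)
Your proof is correct and follows essentially the same route as the paper's: both combine Corollary \ref{distH} with the one-step identity $d(\mathcal{M}_m(\rho^{(0)}))_i=\sum_j p_{m,j}^i\,{\rm Tr}(\rho_j^{(0)})$ (which the paper leaves implicit) and the iterated matrix relation $P_{k_n}\cdots P_{k_1}=\sum_m q_{k_n,\ldots,k_1}^m P_m$ coming from \eqref{matrixproduct}. Your explicit verification of the trace collapse via the isometries $U_{i,j;m}$, and your observation that condition $(S)$ is not needed here, are both accurate.
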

\begin{proof}
Let $q_{k_n,\ldots,k_1}^m$'s be the multi-structure constants of $H(\Gamma,v_0)$. By the equation \eqref{matrixproduct}, the $i$-coefficient of the (column) vector $(P_{k_n}\cdots
P_{k_1})p^{(0)}$ coincides with
\[
\sum_{j\in
I(\Gamma,v_0)}\sum_{m\in
I(\Gamma,v_0)}q_{k_n,\ldots,k_1}^mp_{m,j}^i{\rm
Tr}(\rho_j)
\]
which is equal to the $i$-coefficient of the distribution of
\[
\sum_{m\in
I(\Gamma,v_0)}q_{k_n,\cdots,k_1}^m\mathcal{M}_m(\rho).
\]
By Corollary \ref{distH}, this completes the proof.
\end{proof}

Corollary \ref{QWP} is an analogy of \cite[Proposition 6.1]{apss12}, in OQRWs on distance sets. If $(\Gamma,v_0)$ satisfies the condition $(S)$ in Theorem \ref{ems} and produces a hermitian hypergroup then we have $p_{k_n,\ldots,k_1}^m=q_{k_n,\ldots,k_1}^m$, and hence the proof of Corollary \ref{QWP} implies that the $i$-coefficient of $p^{(n;{\bf
k})}$ coincides with
\[
\sum_{j,m\in
I(\Gamma,v_0)}p_{k_n,\ldots,k_1}^mp_j^{(0)}p_{m,i}^i
\]
for an initial state $\rho^{(0)}=\sum_{i\in\mathcal{D}}\rho_i^{(0)}\otimes\ket{i}\!\bra{i}\in\mathcal{B}_1(\mathcal{H}\otimes\mathcal{K})_{+,1}$. In particular, if we put $\rho^{(0)}=\rho_0^{(0)}\otimes\ket{0}\!\bra{0}$ for some $\rho_0^{(0)}\in\mathcal{B}_1(\mathcal{H})_{+,1}$ then we have
\[
p^{(n;{\bf
k})}=(p_{k_n,\ldots,k_1}^i)_{i\in
I(\Gamma,v_0)}.
\]

In the following two examples, we shall give examples of an OQRW on a distance set, whose $B_{i,j;k}$'s are not scalar multiplications of some isometry operators and satisfy the condition \eqref{HB}

\begin{exam}\label{ex3}
Let $\mathcal{D}$ be a distance set and $\mathcal{H}$ be a separable Hilbert space. Suppose $\{A_i\}_{i\in\mathcal{D}}$ is a family of bounded operators $A_i$ on $\mathcal{H}$ satisfying
\[
\sum_{i\in\mathcal{D}}A_i^*A_i={\bf
1}
\]
in the strong operator topology and mutually commuting.

In the case when $\mathcal{D}=\{0,1,\ldots,N-1\}$ for some $N\in\mathbb{N}$ and $\dim\mathcal{H}=n$ for some $n\in\mathbb{N}$, let $A_i$'s be arbitrary commuting square matrices of order $n$ and $\{\lambda_{i,j}\}_{j=1}^n$ eigenvalues of $A_i$ such that $\sum_{i=0}^{N-1}|\lambda_{i,j}|^2=1$ for all $j=1,\ldots,n$. Then $\{A_i\}_{i=0}^{N-1}$ satisfies $\sum_{i=0}^{N-1}A_i^*A_i={\bf
1}$ by the simultaneous diagonalization. We can choose $A_i$'s as matrices which are not scalar multiplications of unitary matrices. For example, when $n=2,N=1$ the matrices
\begin{align}
&\label{exma}A_0=\frac{1}{2\sqrt{6}}\begin{pmatrix}\sqrt{3}+\sqrt{2}&\sqrt{3}-\sqrt{2}\\\sqrt{3}-\sqrt{2}&\sqrt{3}+\sqrt{2}\end{pmatrix},\\
&\label{exmb}A_1=\frac{1}{2\sqrt{6}}\begin{pmatrix}2+\sqrt{3}&-2+\sqrt{3}\\-2+\sqrt{3}&2+\sqrt{3}\end{pmatrix}.
\end{align}
commute and satisfy $A_0^*A_0+A_1^*A_1={\bf
1}$.

Let $\{Q_{i,j}^k\}_{i,j,k\in\mathcal{D}}$ be structure constants of an arbitrary (non-associative) algebra $\mathbb{C}A$. An OQRW $\mathcal{M}=\{\mathcal{M}_k\}_{k\in\mathcal{D}}$ on $\mathcal{D}$ defined by
\[
B_{i,j;k}=A_i
\]
for each $i,j,k\in\mathcal{D}$ satisfies \eqref{distM}. However, all $\mathcal{M}_k$'s coincide, and hence the family $\{\mathcal{M}_k\}_{k\in\mathcal{D}}$ is not linear independent. 

For an initial state $\rho^{(0)}=\sum_{i\in\mathcal{D}}\rho_i^{(0)}\otimes\ket{i}\!\bra{i}\in\mathcal{B}_1(\mathcal{H}\otimes\mathcal{K})_{+,1}$ and $k\in\mathcal{D}$, put $\rho^{(1)}=\mathcal{M}_k(\rho^{(0)})$. Theorem \ref{distM} implies that
\[
p^{(n;{\bf
k})}=d(\mathcal{M}_{k_n}\circ\cdots\circ\mathcal{M}_{k_1}(\rho^{(0)}))=d(\mathcal{M}_{k}(\rho^{(0)}))=p^{(1)}
\]
for all ${\bf
k}=(k_1,\ldots,k_n)\in\mathcal{D}^n$. This means that the distribution $p^{(1)}$ is stationary.
\end{exam}

\begin{exam}
Let $N=1$ and $A_0,A_1$ the operators in Example \ref{ex3}. If we define OQRW $\mathcal{M}=\{\mathcal{M}_k\}_{k\in\mathcal{D}}$ on the distance set $\mathcal{D}=\{0,1\}$ by
\[
B_{0,j;0}=A_0,\quad
B_{1,j;0}=A_1,\quad
B_{i,j;1}=\frac{1}{\sqrt{2}}{\bf
1}
\]
for each $i,j\in\mathcal{D}$ then we can check that the family $\{B_{i,j;k}\}_{i,j,k\in\mathcal{D}}$ satisfies the condition \eqref{HB} with
\begin{align*}
&Q_{0,0}^0=Q_{0,1}^0=1,\quad
Q_{0,0}^1=Q_{0,1}^1=Q_{1,0}^0=Q_{1,1}^0=0,\quad
Q_{1,0}^1=Q_{1,1}^1=1.
\end{align*}
The above numbers $\{Q_{i,j}^k\}_{i,j,k\in\mathcal{D}}$ is the structures constants of the left zero semigroup $LO_2$ of order two.

For a state $\rho=\rho_0\otimes\ket{0}\!\bra{0}+\rho_1\otimes\ket{1}\!\bra{1}$, we have 
\begin{align*}
\mathcal{M}_k(\rho)=\begin{cases}
A_0(\rho_0+\rho_1)A_0^*\otimes\ket{0}\!\bra{0}+A_1(\rho_0+\rho_1)A_1^*\otimes\ket{1}\!\bra{1}\quad(k=0)\\
\frac{1}{2}(\rho_0+\rho_1)\otimes\ket{0}\!\bra{0}+\frac{1}{2}(\rho_0+\rho_1)\otimes\ket{1}\!\bra{1}\quad(k=1).
\end{cases}
\end{align*}
If $A_0$ and $A_1$ are the matrices defined as \eqref{exma} and \eqref{exmb} then the family $\{B_{i,j;k}\}_{i,j,k=0,1,2}$ satisfies the condition $(1)$ in Proposition \ref{inde}.\end{exam}

\subsection*{Acknowledgements}
The authors would like to express gratitude to Akito Suzuki, Hiromichi Ohno and Yasumichi Matsuzawa for their helpful comments. This work was supported by JSPS KAKENHI Grant-in-Aid for Research Activity Start-up (No. 19K23403).

\end{document}